\renewcommand{\mathbf}[1]{\bm{#1}}
\declaretheorem[name=Theorem]{theorem}
\newtheorem{lemma}[theorem]{Lemma}
\newtheorem{proposition}[theorem]{Proposition}
\theoremstyle{definition}
\newtheorem{definition}[theorem]{Definition}
\crefname{theorem}{Theorem}{Theorems}
\crefname{conjecture}{Conjecture}{Conjectures}
\crefname{observation}{Observation}{Observations}
\crefname{claim}{Claim}{Claims}
\crefname{condition}{Condition}{Conditions}
\crefname{example}{Example}{Examples}
\crefname{fact}{Fact}{Facts}
\crefname{lemma}{Lemma}{Lemmas}
\crefname{corollary}{Corollary}{Corollaries}
\crefname{definition}{Definition}{Definitions}
\crefname{remark}{Remark}{Remarks}
\crefname{proposition}{Proposition}{Propositions}
\crefname{question}{Questions}{Questions}
\crefname{problem}{Problem}{Problems}
\crefname{exercise}{Exercise}{Exercises}
\crefname{section}{Section}{Sections}
\crefname{appendix}{Appendix}{Appendices}
\newcommand{\abs}[1]{\ensuremath{\left|#1\right|}}
\newcommand{\ceil}[1]{\ensuremath{\left\lceil#1\right\rceil}}
\newcommand{\floor}[1]{\ensuremath{\left\lfloor#1\right\rfloor}}
\renewcommand{\Pr}[2][]{\ensuremath{{\mathop{\mathrm{Pr}}}_{#1}\insq{#2}}}
\newcommand{\inb}[1]{\left\{#1\right\}}
\newcommand{\inp}[1]{\left(#1\right)}
\newcommand{\insq}[1]{\left[#1\right]}
\newcommand*{\defeq}{\mathrel{\rlap{\raisebox{0.3ex}{$\m@th\cdot$}}\raisebox{-0.3ex}{$\m@th\cdot$}}=}
\newcommand*{\eqdef}{=
  \mathrel{\rlap{\raisebox{0.3ex}{$\m@th\cdot$}}\raisebox{-0.3ex}{$\m@th\cdot$}}}
\NewDocumentCommand{\blk}{m o}{B_{#1\IfValueT{#2}{,#2}}}
\NewDocumentCommand{\lf}{m}{\mathop{\mathrm{left}}(#1)}
\NewDocumentCommand{\rg}{m}{\mathop{\mathrm{right}}(#1)}
\newcommand{\ham}{\ensuremath{\Delta_{\textup{Hamming}}}}
\newcommand{\im}{\ensuremath{\mathrm{Imm}}}
\title{
The Rate-Immediacy Barrier in Explicit Tree \\ Code Constructions}
\author{
  Gil Cohen\thanks{
    Department of Computer Science, Tel Aviv University, Tel Aviv, Israel.
    Email: {\texttt gil@tauex.tau.ac.il}. Supported by ERC starting grant 949499.
  }
    \and
  Leonard J. Schulman\thanks{
    E{\&}AS Division, California Institute of Technology, Pasadena, USA.
    Email: {\texttt schulman@caltech.edu}. Supported in part by NSF Award 2321079.
  }
  \and
  Piyush Srivastava\thanks{
    Tata Institute of Fundamental Research, Mumbai, India.
    Email: {\texttt piyush.srivastava@tifr.res.in}.  Supported in part by the Department of
  Atomic Energy, Government of India, under project nos. RTI4001 and RTI4014, by SERB
  MATRICS grant number MTR/2023/001547, and by the
  Infosys-Chandrasekharan virtual center for Random Geometry.
  }
}
\date{}
\begin{document}

\maketitle

\iftoggle{arxiv}{\pagenumbering{gobble}}

\begin{abstract}
Since the introduction of tree codes by Schulman (STOC 1993), explicit
construction of asymptotically good tree codes has remained a notorious challenge. A work by Cohen, Haeupler and Schulman (STOC 2018), as well as the state-of-the-art construction by Ben Yaacov, Cohen, and Yankovitz (STOC 2022) have achieved codes with rate $\Omega(1/\log\log n)$, exponentially improving upon the original  rate $\Omega(1/\log n)$ construction of Evans, Klugerman and Schulman from 1994. All of these constructions rely, at least in part, on increasingly sophisticated methods of combining (block) error-correcting codes.

In this work, we identify a fundamental barrier to constructing tree codes using known techniques. We introduce a key property which we call \emph{immediacy}, that, while not required by the original definition of tree codes, is shared by all known constructions and inherently arises in recursive combinations of error-correcting codes. Our main technical contribution is the proof of a \emph{rate-immediacy trade-off}, which, in particular, implies that any tree code with constant distance and non-trivial immediacy must necessarily have vanishing rate. By applying our rate-immediacy trade-off to existing constructions, we establish that their known rate analyses are essentially optimal given their actual error-correction properties. More broadly, our work highlights the need for fundamentally new ideas---beyond the recursive use of error-correcting codes---to achieve substantial progress in explicitly constructing asymptotically good tree codes.

\end{abstract}

\iftoggle{arxiv}{\newpage}{}

 \newcommand{\TagPart}{Tagged partition}
\newcommand{\AtagPart}{A tagged partition}
\newcommand{\TagParts}{Tagged partitions}
\newcommand{\tagPart}{tagged partition}
\newcommand{\atagPart}{a tagged partition}
\newcommand{\tagParts}{tagged partitions}

\newcommand{\LamPart}{Laminar partition}
\newcommand{\AlamPart}{A laminar partition}
\newcommand{\LamParts}{Laminar partitions}
\newcommand{\lamPart}{laminar partition}
\newcommand{\alamPart}{a laminar partition}
\newcommand{\lamParts}{laminar partitions}

\newcommand{\LamCode}{Immediacy code}
\newcommand{\lamCode}{immediacy code}
\newcommand{\LamCodes}{\LamCode{}s}
\newcommand{\lamCodes}{\lamCode{}s}
\newcommand{\AlamCode}{An \lamCode{}}
\newcommand{\alamCode}{an \lamCode{}}

\section{Introduction}
\pagenumbering{arabic} 
\setcounter{page}{1}
Coding theory addresses the problem of communication over an imperfect channel. In the classic setting~\cite{shannon48,hamming1950error}, Alice aims to communicate a message to Bob via a channel that may introduce errors. The central question is: how should Alice encode her message so that Bob can accurately recover it, provided that the number of errors is limited? This scenario motivates the notion of an error-correcting code. Formally, a function $C \colon \Sigma^k \to \Sigma^{n}$ is a \emph{block error-correcting code} with distance $\delta$ and block length \(k\) if, for every pair of distinct strings $x, y \in \Sigma^k$, their respective encodings $C(x)$ and $C(y)$ differ in at least a $\delta$ fraction of positions, with respect to the Hamming distance. The \emph{rate} of information transmission $\rho = \frac{k}{n}$ and the fraction of errors corrected, $\frac{\delta}{2}$, are competing parameters. A fundamental problem in coding theory is to construct explicit \emph{asymptotically good codes}, i.e., codes that maintain constant distance $\delta > 0$ and constant rate $\rho > 0$. Here, by ``explicit'' we mean that the encoding function $C$ can be computed in polynomial time. Justesen~\cite{Justensen} was the first to provide such an explicit construction. Since then, many explicit constructions have been developed (see, e.g.,~\cite{TVZ,Sipser-Spielman}).

While error-correcting codes solve the problem of sending a single message from Alice to Bob, there are scenarios involving dynamic interaction, where messages exchanged depend on previously communicated information. Interactive coding addresses this more intricate problem of enabling reliable interactive communication over imperfect channels. Standard error-correcting codes are insufficient for this task. Readers interested in this rapidly growing research field are encouraged to consult the comprehensive survey by Gelles~\cite{GellesSurvey}. 

Tree codes are crucial combinatorial structures introduced in~\cite{schulman93,schulman96} to facilitate deterministic interactive
coding. Analogous to error-correcting codes in single-message scenarios, tree
codes are trees equipped with a specific distance property. To define this
formally, we first introduce some notation. Let $T$ be an infinite complete
rooted binary tree, with edges labeled from an alphabet $\Sigma$. For two vertices
$u, v$ of equal depth, let their least common ancestor be denoted $w$. Let $\ell$
be the number of edges on the path from $u$ (or $v$) to $w$. Define $p_u,
p_v \in \Sigma^\ell$ as the sequences of symbols labeling the edges on the paths from $w$ to $u$ and from $w$ to $v$, respectively. The quantity $h(u,v)$ denotes the relative Hamming distance between $p_u$ and $p_v$. Intuitively, $h(u,v)$ measures how distinct the sequences labeling the paths to $u$ and $v$ are, disregarding their common prefix. A tree code enforces a minimum bound on this quantity:

\begin{definition}[Tree codes~\cite{schulman93}]
	Let $\delta \in [0,1]$ and let $T$ be an infinite complete rooted binary tree. A labeling of the edges of $T$ from an alphabet $\Sigma$ is called a \emph{tree code with distance $\delta$} if, for every pair of vertices $u, v$ at the same depth, it holds that $h(u, v) \ge \delta$. 
	The \emph{rate} of the tree code, denoted by $\rho$, is defined as $\frac{1}{\log_2 |\Sigma|}$.
\end{definition}

An alternative definition found in the literature describes a tree code as a family \((T_n)_{n \in \mathbb{N}}\), where each \(T_n\) is a rooted binary tree of \emph{finite} depth \(n\). Such a family is called a tree code with distance \(\delta\) if each \(T_n\) has a distance of at least \(\delta\), as previously defined. Clearly, an infinite tree code naturally induces such a finite family when truncated at any given depth. Conversely, it has been shown that the opposite direction also holds~\cite{benyaacovCY22}, with only a constant degradation in parameters. Hence, we use these two definitions interchangeably throughout this informal discussion.

Initially, it was not clear whether an asymptotically good tree code—one with
both positive rate and positive distance—existed. Schulman provided three
distinct proofs showing that, for any constant $\delta < 1$, a tree code exists with
alphabet size $|\Sigma| = O_\delta(1)$ that achieves distance $\delta$. More recently, by
adapting one of these approaches, it was proved that there is a tree code with
just $|\Sigma| = 4$ symbols, namely a rate-$\frac1{2}$ tree code, and positive
distance (specifically, $\delta > 0.136$)~\cite{CS20}.  It was also observed in
\cite{CS20} that $3$ symbols are insufficient for guaranteeing positive distance.
However, all known existence proofs of asymptotically good tree codes are
non-explicit, relying on probabilistic methods in non-trivial ways. Despite significant interest, the explicit construction of asymptotically good tree codes remains a notorious open problem.

Given these difficulties, researchers have naturally considered constructing
tree codes allowing for vanishing rate, where the objective is nonetheless to
minimize the rate deterioration as a function of the depth $n$.\footnote{We
  sometimes also refer to the depth \(n\) of a tree code as its
  \emph{transmission length}.} A trivial construction which encodes the entire
path from the root on each edge, achieves $\delta = 1$ but rate $\frac1{n}$. In an
unpublished manuscript, Evans, Klugerman, and Schulman~\cite{EvansKS94} provided
a construction with rate $\Omega(1/\log n)$. An improvement was made only fairly
recently by Cohen, Haeupler and Schulman \cite{CohenHS18}, who constructed tree
codes with rate $\Omega(1/\log\log n)$. A decoding algorithm to the latter tree code
construction was devised by Narayanan and Weidner~\cite{NarayananW20}, who also
suggested alternative constructions. Connections between~\cite{CohenHS18} and
the work of Pudl{\'a}k \cite{pudlak2013linear} (see also \cite{Yanko25} for a
related construction) were explored by Bhandari and
Harsha~\cite{BH20}. Additionally, two distinct explicit constructions with
constant rate were proposed by Moore and Schulman~\cite{moore2014tree} and by
Ben Yaacov, Cohen, and Narayanan \cite{BenYaacovCN21}, but their correctness
hinges upon plausible but unproven conjectures.

The state-of-the-art construction by Ben Yaacov, Cohen, and Yankovitz \cite{benyaacovCY22} also achieves a rate of \(\Omega(1/\log\log n)\) for constant distance \(\delta\), but it improves upon the dependence on the distance parameter. Specifically, while the construction of \cite{CohenHS18} has a rate that is upper bounded by \(O(1/\log\log n)\) regardless of the value of \(\delta\), the construction of \cite{benyaacovCY22} achieves rate  \(\Omega(1/(\sqrt{\delta} \cdot \log\log n))\). In particular, it can attain \emph{constant} rate by compromising on the distance, achieving a distance of \(\delta = \Omega(1/(\log\log n)^2)\). This related, “dual” problem of constructing a constant-rate tree code with slowly deteriorating distance was first studied by Gelles, Haeupler, Kol, Ron-Zewi, and Wigderson \cite{GHKRZW15}, who showed how to achieve distance \(\Omega(1/\log n)\).

\subsection{Immediacy}

Given that we still have no explicit construction of asymptotically good tree codes—despite significant efforts for over three decades—it is important to better understand why current techniques fall short. Identifying a barrier may help steer research away from potential dead ends. More specifically, since tree codes induce standard block error-correcting codes, while the latter are heavily relied upon in all known constructions (except for the two conjecture-based approaches \cite{moore2014tree, BenYaacovCN21}), it is natural to ask: to what extent might constant-rate tree codes be constructed from block codes? Exploring this question may shed light on the fundamental nature of tree codes and clarify how ``close'' they truly are to their well-understood counterparts, block error-correcting codes.

The main contribution of this paper is identifying a fundamental barrier in the
construction of tree codes. We observe that each of the existing explicit
constructions of tree codes (except those relying on unproven conjectures) satisfies a stronger guarantee than strictly required for a tree code; we refer to this guarantee as \emph{immediacy}. This stronger property arose inherently due to the recursive nature of existing constructions \cite{EvansKS94,CohenHS18,NarayananW20,benyaacovCY22}. 
We prove that the immediacy property incurs a cost in rate. Specifically, we
establish a quantitative \emph{rate-immediacy trade-off}, precisely capturing
the achievable rate given a code's immediacy. Our results show that existing
tree code constructions essentially achieve an optimal rate-immediacy balance,
thus clarifying why their rate inevitably vanishes. More broadly, our work
indicates that the typically employed framework, that of tiling standard
error-correcting block codes in progressively sophisticated ways, cannot yield asymptotically good tree codes. Such strategies inherently satisfy the immediacy property, constraining their potential rate.

We formally define immediacy in detail later, as its precise definition is
somewhat technical, involving a certain set system (see the definition of
\emph{\lamCodes{}} in Section \ref{sec:laminar-codes}). However, to capture the
essence, we first provide an informal, simplified definition in which immediacy
is represented by a monotone-increasing function
\(\im : \mathbb{N} \to \mathbb{N}\). A tree code \(C\) is said to have immediacy
\(\im\) if, for every pair of messages \(x, y \in \Sigma^n\) (where \(n\) is the
transmission length), every index \(s\) where
\(x_s \neq y_s\), and each integer \(k \geq 1\), the relative Hamming distance between
\(C(x)\) and \(C(y)\), restricted to the interval \([s, s + \im(k))\) (provided
that \(s + \im(k)\) is no larger than the depth of the tree, if finite), is
bounded below by \(\delta\).  

The definition of a standard tree code imposes no non-trivial immediacy constraint. Disagreements between corresponding codewords are required to accumulate only starting from the \emph{first} index \( s \) at which the two messages \( x \) and \( y \) differ.
In contrast, a code admitting an immediacy function requires that disagreements accumulate starting from \emph{every} index \(s\) for which \(x_s \neq y_s\), at interval lengths determined by the immediacy function \(\im\). Put differently, unlike a standard tree code, a tree code with immediacy function \(\im\) supports \emph{cold-start decoding}. That is,
if during any interval \([s, s + \im(k))\) the noise rate is below $\delta/2$, then the receiver can decode the \(s\)-th bit---even if the noise prior to time $s$ was high and bits prior to $s$ cannot yet be reconstructed.
While this property may be advantageous in certain applications (and is typical of weakly-encoded communications such as human speech or packetized transmissions), our results show that it is an ``expensive'' feature that cannot be afforded when aiming to construct asymptotically good tree codes.

To take an example, it can be shown by inspection that the first tree code
construction \cite{EvansKS94}, which has a rate of \(\Omega(1/\log n)\) (where
\(n\) is the depth of the tree), {possesses a non-trivial immediacy property
  that corresponds roughly to} \(\im(k) = 2^{\Theta(k)}\), which we denote throughout
{this informal presentation} as $\exp(k)$. At this informal
stage, it is convenient to regard \(\im\) as a real-valued function—which in
this case is \(\im(x) = \exp(x)\)—thus extending the original discrete
definition. With this, our main result can be informally stated as follows:

\begin{theorem}[Main result; informal]\label{thm:main-informal}
  Let \(T\) be a depth-\(n\) tree code with constant distance and immediacy
  function \(\im\).  Then, subject to some technical conditions\footnote{For
    instance, it turns out that exponential immediacy {(i.e.
      \(\im(k) = \exp(k)\) as described above)} implies \emph{linear}
    immediacy. Consequently, Theorem \ref{thm:main-informal} should be
    interpreted as applying only to exponential (or faster-growing) immediacy.
  } on \(\im\), that are satisfied for the cases \( \im(k) = \exp(k)\) and
  \(\im(k) = \exp(\exp(k) )\), the rate of \(T\) satisfies
  \[
    \rho = O\left(\frac{1}{\im^{-1}(n)}\right).
  \]
\end{theorem}

Recall that for the construction given in \cite{EvansKS94}, the immediacy
corresponds roughly to $\im(x) = \exp(x)$ and so
\(\im^{-1}(x) = \Theta(\log x)\). Thus, Theorem \ref{thm:main-informal} indicates that
the rate of such a construction cannot exceed \(O(1/\log n)\)---which matches its
known rate, up to the constant.
Theorem \ref{thm:main-informal} also explains the rate achieved by the state-of-the-art tree code
constructions \cite{CohenHS18,NarayananW20,benyaacovCY22}. Upon inspection, these
constructions—similarly to \cite{EvansKS94}—also rely on composition of
error-correcting codes and possess a non-trivial, though exponentially weaker,
immediacy that corresponds roughly to \(\im(x) = \exp(\exp(x))\).
Thus, Theorem \ref{thm:main-informal} indicates that the rate of these
constructions cannot exceed \(O(1/\log \log n)\), matching the proven rate of
these codes.

\paragraph*{Weaker immediacy properties} We emphasize here, however, that the
precise immediacy properties enjoyed by the \cite{EvansKS94} and
\cite{CohenHS18} constructions are slightly weaker than that implied by an
immediacy function, and so \cref{thm:main-informal} does not \emph{formally} apply to
them. Roughly speaking, these constructions only admit cold-start decoding after
a ``lag''.  We show in \cref{sec:appl-rate-upper-bounds} that our general
framework of \emph{\lamCodes} and our general rate upper bound results in
\cref{thm-lamcode-rate-upper-bound,thm-deficient-lamcode-rate-upper-bound}
discussed in \cref{sec-main-results} are powerful enough to extend to
constructions with such weaker forms of immediacy.  Thus, as shown in
\cref{sec:appl-rate-upper-bounds}, the rate upper bounds indicated by
\cref{thm:main-informal} indeed apply to the above constructions. (Further, as we show in \cref{sec:immed-funct-lamc}, constructions satisfying
the simplified immediacy condition described here are also immediacy codes with
appropriate parameters and our general rate upper bounds proved in
\cref{thm-lamcode-rate-upper-bound,thm-deficient-lamcode-rate-upper-bound}
therefore also apply to them.)

More generally, immediacy should be kept in mind when constructing tree codes:
immediacy properties of a proposed construction can be analyzed in order to
obtain an upper bound on the rate it can achieve. (Further, the above discussion
also shows that a rough analysis along the lines of \cref{thm:main-informal} may
serve as a useful guide before a formal analysis of the exact immediacy
properties of a proposed construction.)
It is worth noting that the two constructions whose analyses rely on unproven
conjectures \cite{moore2014tree, BenYaacovCN21} do not appear to exhibit
non-trivial immediacy. Therefore, they are not excluded from being
asymptotically good codes by this work.

  \paragraph*{Rate–distance–immediacy trade-off}
  We point out that our rate upper bounds apply also to codes with vanishing
  distance. In fact, for a general relative distance $\delta > 0$, the upper bound on
  the rate $\rho$ stated in Theorem \ref{thm:main-informal} exhibits, roughly, an
  inverse-linear dependence on $\delta$. The precise details are more nuanced, and we
  direct the reader to the formal statement of our main result in Theorem
  \ref{thm-lamcode-rate-upper-bound}, and to the computations leading to
  \cref{eq:27,eq:22} (that show that fast growing immediacy functions imply an
  appropriate immediacy codes condition).  The key takeaway, however, is that
  Theorem \ref{thm:main-informal}, in its full generality, yields a
  \emph{rate–distance–immediacy trade-off}. For ease of presentation, we choose
  to focus on the trade-off between rate and immediacy in this introductory
  section (see Section \ref{sec:overview} for further discussion).

\subsection{Related work}

\paragraph*{Randomized vs.\ deterministic encoding schemes} Randomized hashing
schemes were employed for online and interactive communication~\cite{schulman92}
before deterministic schemes were made possible by tree codes.  Randomized
schemes do not suffer a trade-off between rate and immediacy. Computationally
efficient encoding and decoding was shown for such schemes
in~\cite{BK12,BN13,GH14}; in a considerable accomplishment Brakerski, Kalai, and
Saxena \cite{brakerskiKS20} later showed that, with access only
to explicit (i.e., polynomial-time encodable) tree codes, an interactive communication
can be performed at the rate of those tree codes, while also performing all
necessary decoding in polynomial time.

\paragraph*{Tree code variants} The original motivation for tree codes
  came from the question of developing analogues of Shannon's channel coding
  theorem (for one-way communication) for situations involving interaction.
  Using the existence of tree codes, such an interactive coding theorem was
  proved by Schulman~\cite[Theorem 2]{schulman96}.  Later work by Gelles,
  Moitra, and Sahai \cite{GMS11} showed that such an interactive coding theorem
  can also be established using a weaker variant of tree codes that they called
  \emph{potent tree codes}. 
On the other hand, stronger variants of tree codes featuring local testability
were recently introduced by Moud, Rosen, and Rothblum \cite{MRR25}. It is
interesting to compare this local testability guarantee with the immediacy
property of tree codes, as immediacy also provides a form of local guarantee,
albeit of a seemingly different nature. Lastly, we mention that a
signal-processing analogue of tree codes was introduced by Schulman and
Srivastava\cite{SchulmanS19}, and a variant of tree codes, dubbed
\emph{palette-alternating tree codes} which allows one to bypass the
$\frac1{2}$-rate barrier was introduced by Cohen and Samocha \cite{CS20}.

\subsection{Technical overview and organization of the paper}  
\label{sec:overview}

We follow the tradition of using information theoretic tools for bounding
capabilities of codes.  Information theory had its genesis in understanding
fundamental limitations on source and channel coding in the stochastic error
model.  However, it has also been useful in understanding limitations on codes in
the ``Hamming model'' of a bounded number of worst case errors, including in the
setting of more modern coding-theoretic notions such as the notion of local
decoding analyzed by Katz and Trevisan~\cite[Section
3.1]{katzEfficiencyLocalDecoding2000}, as well as its more delicate relaxed
variant analyzed by Gur and Lachish~\cite{GurL21} (see also
\cite{DallGL23,goldreich2023lower}).  In our application to understanding
immediacy properties of tree codes, the crucial ingredient turns out to be a
careful accounting of the ``common information'' between different parts of the
codeword, performed simultaneously at different length scales.

{
\interfootnotelinepenalty=10000

The main conceptual contribution of this paper, introduced in
\cref{sec:laminar-codes}, is the notion of \emph{\lamCodes{}}. The notion of
\lamCodes{} provides a more robust formulation of the informal notion of
immediacy functions discussed in the introduction (the close connection between
the two notions is further explored in \cref{sec:immed-funct-lamc}), and also
allows us to smoothly carry out the accounting of common information alluded to
above. The main information theoretic tool we need for this accounting is a form
of the data processing inequality (\cref{lem-data-processing})\footnote{This
  corollary of the (proof of the) data processing inequality also underlies the
  formulation of the \emph{Gács-Körner common information}~(see, e.g.,
  \cite[Section III]{kamathNewDualGacsKorner2010}), but we do not need this
  connection for this paper.} to quantify ``common information'' between
different parts of a code word. Our main technical result is that
\emph{\lamCodes{} cannot have high rates} (\cref{thm-lamcode-rate-upper-bound}):
this is proved in \cref{sec:rate-upper-bound}. The principle underlying the
definition of \lamCodes{} and the proof of \cref{thm-lamcode-rate-upper-bound}
is a careful accounting of the amount of ``information'' that needs to be
duplicated in order to provide the immediacy guarantees that the code provides.
We emphasize that once we have set up the definition of \lamCodes{} and the
requisite form of the data processing inequality, the proof of
\cref{thm-lamcode-rate-upper-bound} is technically quite simple.

}

Finally, in \cref{sec:appl-rate-upper-bounds}, we apply our framework to study
the immediacy properties of the tree code constructions proposed in
\cite{EvansKS94}, \cite{CohenHS18} and \cite{GHKRZW15}. As discussed earlier,
the immediacy properties of these constructions differ slightly from the
informal notion introduced above. For the first two constructions, we show that
the rates achieved by these codes are optimal—up to constant factors—with
respect to the immediacy guarantees they provide. In
  \cref{sec-subexp-time-constr}, we consider a method due to
  Braverman~\cite{Braverman12} for constructing depth-\(n\) good tree codes in
  time sub-exponential in \(n\), and show that the rate-runtime trade-off
  obtained by the best currently known version of this construction is
  essentially the best possible given the immediacy guarantees achieved by the
  construction.

The construction of \cite{GHKRZW15} attempts to tackle a different trade-off: it
achieves an \(\Omega(1/\log n)\) distance with constant rate.  We show that for the
immediacy guarantee that this construction provides, no constant rate tree code
can improve the distance to \(\omega\inp{\frac{\log \log n}{\log n}}\).  This
establishes that the rate/distance trade-off achieved by this construction is
close to the best possible given the immediacy guarantees it achieves.

We do not carry out a full analysis of the immediacy properties of the constructions in \cite{NarayananW20, benyaacovCY22}. The construction of \cite{NarayananW20} employs the same ``code tiling'' structure as \cite{CohenHS18}; the main difference lies in the ``outer'' large-alphabet tree code used. Since immediacy is determined primarily by the code tiling aspect, the analysis of \cite{NarayananW20} proceeds similarly to our analysis of \cite{CohenHS18}. The construction of \cite{benyaacovCY22}, in contrast, uses a somewhat different code-tiling strategy. Nevertheless, a variant of our analysis for \cite{CohenHS18} can still be adapted to study its immediacy properties. Lastly, we note that the construction of \cite{GHKRZW15} is similar to that of \cite{EvansKS94} with respect to the code-tiling component. As such, our result—when viewed not as a rate–immediacy trade-off but rather as a distance–immediacy trade-off—can be used to establish an upper bound on the distance achievable by the construction.

\medskip

We begin in the next section with some technical preliminaries.

\section{Preliminaries}
\label{sec:preliminaries}

\paragraph*{Notation} We will view strings \(s\) of length \(\ell\) as indexed by
integers in \(\inb{1, 2, \dots, \ell} \eqdef [\ell]\). Given such a string, and a
subset \(A \subseteq [\ell]\) we denote by
\(s_A\) the sub-sequence of \(s\) obtained by taking the letters at the indices in
\(A\).  We will sometimes overload the usual notation for real intervals (e.g.
\([a, b], (a, b] \) etc.) to refer only to the \emph{integers} in those
intervals: this overloading should always be clear from the context. We denote
\(\log_2 x\) by \(\lg x\) for any positive real \(x\).  

\subsection{Information theory}
We use standard information theoretic notation for entropy, conditional entropy
and mutual information~(see, e.g., \cite{coverT06}), which we now proceed to
review.  Given a random variable \(X\) taking values in some finite set
\(\Omega\), its \emph{entropy} \(H(X)\) is defined as
\(-\sum_{\omega \in \Omega}\Pr{X = \omega}\cdot \lg \Pr{X = \omega}\) (so we measure entropy in bits), 
with the convention $0 \cdot \lg 0 = 0$.
Given two random variables \(X\) and \(Y\), both taking values
in (possibly different) finite sets, the \emph{conditional entropy} \(H(X|Y)\)
is defined as
\begin{equation}
  \label{eq:18}
  H(X|Y) \defeq -\sum_{x, y}\Pr{X = x, Y = y}\cdot \lg \Pr{X = x | Y = y}.
\end{equation}
The \emph{mutual information} \(I(X : Y)\) is then defined as
\begin{equation}
  \label{eq:20}
  I(X : Y) = H(X) - H(X|Y).
\end{equation}
If \(Z\) is another random variable taking values in a finite set, the
\emph{conditional mutual information} \(I(X : Y | Z)\) is defined as
\begin{equation}
  \label{eq:21}
  I(X : Y | Z ) = H(X|Z) - H(X|Y, Z).
\end{equation}
We collect here some of the standard properties of these quantities, which we shall subsequently use without comment. 
Proofs of these properties can be found, e.g., in~\cite{coverT06}.
\begin{proposition}[\textbf{Properties of mutual information and entropy}]
  Let \(X, Y\) and \(Z\) be random variables taking values in finite sets
  \(\Omega_1, \Omega_2, \Omega_3\) respectively.  Then
\begin{itemize}
\item (Non-negativity of the entropy) \(0 \leq H(X) \leq {\lg \abs{\Omega_1}}\).
\item (Non-negativity of the mutual information) \(I(X : Y) \geq 0\) and
  \(I(X:Y | Z ) \geq 0\).
\item (Conditioning reduces entropy) \(H(X|Y, Z) \leq H(X|Y) \leq H(X)\).
\item (Chain rule for entropy) \(H(X, Y) = H(X) + H(Y|X) = H(Y) + H(X|Y).\)
\item (Sub-additivity of entropy) \(H(X, Y) \leq H(X) + H(Y)\).
\item (Chain rule for mutual information)
  \(I(X, Y : Z) = I(X : Z) + I(Y : Z | X) = I(Y : Z) + I(X: Z | Y)\).
\item (Conditional entropy of deterministic functions is zero) If there is a
  function \(f: \Omega_2 \rightarrow \Omega_1\) such that \(f(Y) = X\), then \(H(X | Y) = 0\).
\end{itemize}
\end{proposition}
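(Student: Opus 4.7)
The plan is to establish each item using only the definitions of $H$ and $I$ together with a single standalone inequality: Gibbs' inequality, which asserts that for two probability distributions $p, q$ on a common finite set, $\sum_\omega p(\omega) \lg(p(\omega)/q(\omega)) \geq 0$, with equality iff $p = q$. I would prove this first from the elementary bound $\ln t \leq t - 1$ (equivalently, from concavity of $\ln$ via Jensen's inequality), adopting the standard conventions $0 \lg 0 \defeq 0$ and restricting sums to the support so that the $\log$'s are well-defined. Essentially every other bound in the proposition reduces to Gibbs' inequality or to a rearrangement of the defining sums.

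With Gibbs' inequality available, I would handle the items in the following order. The lower bound $H(X) \geq 0$ is immediate from $-p \lg p \geq 0$ for $p \in [0,1]$; the upper bound $H(X) \leq \lg \abs{\Omega_1}$ is Gibbs' inequality applied with $q$ the uniform distribution on $\Omega_1$. For the chain rule $H(X,Y) = H(X) + H(Y|X)$, I would start from $\Pr{X=x, Y=y} = \Pr{X=x} \cdot \Pr{Y = y \mid X = x}$, take logarithms, and sum against the joint distribution; grouping terms yields $H(X) + H(Y|X)$, and by symmetry also $H(Y) + H(X|Y)$. Non-negativity of mutual information $I(X:Y) \geq 0$ is Gibbs' inequality applied to the joint distribution of $(X,Y)$ and the product of its marginals; after rearrangement this expression equals $H(X) - H(X|Y)$, so the same computation simultaneously yields $I(X:Y) \geq 0$ and $H(X|Y) \leq H(X)$. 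The conditional versions $I(X:Y|Z) \geq 0$ and $H(X|Y,Z) \leq H(X|Y)$ follow by averaging the unconditional versions over the realizations of $Z$, since for each fixed $Z = z$ the conditional distribution of $(X,Y)$ given $Z = z$ is itself a legitimate joint distribution. Sub-additivity $H(X,Y) \leq H(X) + H(Y)$ is then the chain rule combined with $H(Y|X) \leq H(Y)$.

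For the chain rule for mutual information, I would expand both sides using $I(\cdot : \cdot) = H(\cdot) - H(\cdot | \cdot)$ together with the entropy chain rule; both $I(X:Z) + I(Y:Z|X)$ and $I(Y:Z) + I(X:Z|Y)$ collapse algebraically to $H(X,Y) + H(Z) - H(X,Y,Z)$, which by definition is $I((X,Y):Z)$. For the final item, if $X = f(Y)$, then for each value $y$ of $Y$ the conditional law of $X$ given $Y = y$ is a point mass at $f(y)$, so each inner sum $-\sum_x \Pr{X = x \mid Y = y} \lg \Pr{X = x \mid Y = y}$ appearing in the definition \eqref{eq:18} of $H(X|Y)$ vanishes.

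Since these identities are entirely standard, I do not anticipate a genuine obstacle; the main care is bookkeeping around the boundary case $p = 0$ in expressions like $-p \lg p$ and in conditionals $\Pr{X = x \mid Y = y}$ when $\Pr{Y = y} = 0$, both handled by the usual continuity conventions and by restricting sums to the support of the joint distribution.
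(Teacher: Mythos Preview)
Your proposal is correct and follows the standard textbook development; the paper itself does not give a proof, instead simply stating that ``the proofs of these properties are easy, and can be found, e.g., in~\cite{coverT06}.'' So there is nothing to compare against beyond noting that your Gibbs-inequality-based argument is precisely the kind of derivation one finds in that reference.
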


\section{\LamCodes{}}
\label{sec:laminar-codes}

In this section, we first introduce the notion of \emph{\lamCodes{}}
(~\cref{def-lamcodes}), a formal abstraction of the informal notion of immediacy
discussed in the introduction. To connect the two notions, we show in
\cref{sec:immed-funct-lamc} how codes with the two kinds of immediacy functions
discussed in the introduction satisfy the \lamCode{} property.  In
\cref{sec:rate-upper-bound}, we prove our main technical result
(\cref{thm-lamcode-rate-upper-bound}): a general rate upper bound for
\lamCodes{}.

The motivation behind our notion of \lamCodes{} is to capture the
\emph{cold-start decoding} property implied by immediacy: if the noise in a
given interval is small, then a code with immediacy properties should be able to
decode at least part of the interval.  For greater applicability of our
framework, we would not like to impose such a condition on \emph{all} intervals, and
towards this goal we begin with a few combinatorial definitions.  By a
\emph{partition} of a set \(U\), we mean an ordered tuple of pairwise disjoint
non-empty subsets of \(U\) whose union is \(U\).

\begin{definition}[\textbf{\TagParts{}}]
  A \emph{\tagPart{}} \(S\) of \([n]\) is a partition \((S_1, S_2, \dots, S_k)\)
  of \([n]\), along with a partition \((\lf{S_i}, \rg{S_i})\) of each
  \(S_i\), \(1 \leq i \leq k\), into two non-empty sets.
\end{definition}

\begin{definition}[\textbf{\LamParts{}}]
  An \emph{\((\alpha, \ell)\)-\lamPart{}} of \(n\) is a tuple
  \(P = (P_0, P_1, P_2, \dots, P_{\ell})\) of partitions of \([n]\) into disjoint
  non-empty subsets, in which \(P_1, \dots, P_{\ell}\) are \tagParts{}, and
  which satisfies the following properties:
  \begin{enumerate}
  \item \textbf{(Size property)} For each \(1 \leq i \leq \ell\), and for each
    \(B \in P_i\), \(\abs{\lf{B}} \geq \alpha\abs{B}\). (Here,
    \(\lf{\cdot}\) is as in the definition of \atagPart{}).
  \item \textbf{(The laminar property)} For each partition \(P_i\) with
    \(i \geq 1\), and any subset \(B \in P_i\), there {exist sub-collections} of
    \(P_{i-1}\) that are partitions of {\(\lf{B}\) and \(\rg{B}\),
      respectively,} into pairwise disjoint non-empty sets (see
    \cref{fig-lamparts} for an illustration).
  \end{enumerate}

\end{definition}

\begin{figure}[h]
  \centering

  \begin{tikzpicture}
\newlength{\rlen}
    \newlength{\rheight}
    \newlength{\rgap}
    \pgfmathsetlength{\rlen}{3cm}   \pgfmathsetlength{\rheight}{1cm}   \pgfmathsetlength{\rgap}{0.3cm}  

\draw[thick] (-\rlen, 0) -- (-\rlen, \rheight) -- (3*\rlen, \rheight) -- (3*\rlen, 0) -- cycle;
    \node at (-\rlen/2, 0.5*\rheight) {\(\lf{B}\)};
    \node at (1.5*\rlen, 0.5*\rheight) {\(\rg{B}\)};

\foreach \i in {-2,-1,...,5} {
        \draw[thick] (\i*\rlen/2, -\rheight-\rgap) rectangle ({(\i+1)*\rlen/2}, -\rgap);
    }

\foreach \x in {-1, 0, 3} {
        \draw[dashed, opacity=0.7] (\x*\rlen, 1.4*\rheight) -- (\x*\rlen,
        -1.4*\rheight -\rgap);
    }

\node at (-1.3*\rlen, 0.5*\rheight) {\large \(\cdots\)};

    \node[label=right:{\large \(P_{i}\)}] at (3.3*\rlen, 0.5*\rheight) {\large \(\cdots\)};

    \node at (-1.3*\rlen, -0.5*\rheight-\rgap) {\large \(\cdots\)};

    \node[label=right:{\large \(P_{i-1}\)}] at (3.3*\rlen, -0.5*\rheight-\rgap)  {\large \(\cdots\)};;

  \end{tikzpicture}
  \caption{An illustration of \lamParts{}}
  \label{fig-lamparts}
\end{figure}

\begin{definition}[\textbf{\LamCodes{}}]\label{def-lamcodes}
  An \emph{\((\alpha, \ell)\)-\lamCode{}} with code alphabet \(\Sigma\) and transmission
  length \(n\) is a function \(c: \Sigma_{\mathrm{in}}^n \rightarrow \Sigma^n\) along with an
  \((\alpha, \ell)\)-\lamPart{} \(P = (P_0, P_1, P_2, \dots, P_{\ell})\) of
  \([n]\) such that
 \begin{itemize}
 \item \textbf{(The code decodes a large neighborhood)} If \(B\) is a subset in
   a partition \(P_i \in P\) for \(i \geq 1\), then for all
   \(x, y \in \Sigma_{\mathrm{in}}^n\), \(x_{\lf{B}} \neq y_{\lf{B}}\) implies that
   \(c(x)_{\rg{B}} \neq c(y)_{\rg{B}}\).  In other words, there is a function
   \(\phi_B: \Sigma^{|\rg{B}|} \rightarrow \Sigma_{\mathrm{in}}^{|\lf{B}|}\) such that
   \(\phi_B(c(x)_{\rg{B}}) = x_{\lf{B}}\) for all \(x \in \Sigma_{\mathrm{in}}^n.\) \end{itemize}
\end{definition}

For some applications, the neighborhood decoding condition above may not hold
for all subsets in all partitions.  For handling these cases, we will also need
the following weakening of the above definition.

\begin{definition}[\textbf{Deficient \LamCode{}}]\label{def-deficient-lamcodes}
  A \(D\)-deficient \((\alpha, \ell)\)-\lamCode{} with code alphabet \(\Sigma\) and transmission length
  \(n\) is a function \(c: \Sigma_{\mathrm{in}}^n \rightarrow \Sigma^n\) along with an
  \((\alpha, \ell)\)-\lamPart{} \(P = (P_0, P_1, P_2, \dots, P_{\ell})\) of
  \([n]\) such that
 \begin{enumerate}
 \item There exist \emph{deficiency subsets} \(S_i\) of \(P_i\), for
   \(1 \leq i \leq \ell\), such that the total size of all the subsets contained in the
   \(S_i\) is at most \(D\).  In symbols,
   \begin{equation}
     \label{eq:28}
     \sum_{i = 1}^{\ell}\sum_{B \in S_i} \abs{B} \leq D.
   \end{equation}
 \item \textbf{(The code often decodes a large neighborhood)} If \(B\) is a
   subset in a partition \(P_i \in P\) for \(i \geq 1\) such that \(B\) is not an
   element of the corresponding deficiency set \(S_i\), then for all
   \(x, y \in \Sigma_{\mathrm{in}}^n\), \(x_{\lf{B}} \neq y_{\lf{B}}\) implies that
   \(c(x)_{\rg{B}} \neq c(y)_{\rg{B}}\).  In other words, there is a function
   \(\phi_B: \Sigma^{|\rg{B}|} \rightarrow \Sigma_{\mathrm{in}}^{|\lf{B}|}\) such that
   \(\phi_B(c(x)_{\rg{B}}) = x_{\lf{B}}\) for all \(x \in \Sigma_{\mathrm{in}}^n.\) \end{enumerate}
\end{definition}

\subsection{Rate upper bounds for \lamCodes{}}
\label{sec-main-results}
We can now state our main technical results, which give rate upper bounds for
(deficient) \lamCodes{}.

\begin{restatable}[\textbf{Rate upper bound for \lamCodes{}}]{theorem}{lamUpperBound}\label{thm-lamcode-rate-upper-bound}
  Let \(n\) be a positive integer, and suppose that
  \(c: \Sigma_{\mathrm{in}}^n \rightarrow \Sigma^n\) is an \((\alpha(n), \ell(n))\)-\lamCode{}. Then,
  \begin{equation}
    \label{eq:19}
    \lg{\abs{\Sigma}} \geq \alpha(n) \ell(n)  \lg \abs{\Sigma_{\mathrm{in}}}.
  \end{equation}
\end{restatable}

\begin{restatable}[\textbf{Rate upper bound for deficient \lamCodes{}}]{theorem}{defLamUpperBound}\label{thm-deficient-lamcode-rate-upper-bound}
  Let \(n\) be a positive integer, and suppose that
  \(c: \Sigma_{\mathrm{in}}^n \rightarrow \Sigma^n\) is a \(D\)-deficient
  \((\alpha(n), \ell(n))\)-\lamCode{}. Then,
  \begin{equation}
    \lg{\abs{\Sigma}} \geq \alpha(n) \cdot \inp{\ell(n) - \frac{D}{n}}\cdot \lg \abs{\Sigma_{\mathrm{in}}}.
  \end{equation}
\end{restatable}

\paragraph*{Remark} In \cref{sec:immed-funct-lamc}, we describe how the notion
of immediacy handled in the above theorems relates to the informal discussion in
the Introduction in terms of immediacy functions.  We then show how the informal
statement of \cref{thm:main-informal} can be formalized using the above
theorems.  Note, however, that the more abstract nature of the above theorems
allows us to handle the subtler immediacy properties of the known tree code
constructions we consider in \cref{sec:appl-rate-upper-bounds}.

\subsection{Immediacy functions and \lamCodes{}}
\label{sec:immed-funct-lamc}

We now show that tree codes satisfying the immediacy conditions described in the
introduction also satisfy the \lamCode{} condition, with parameters depending
upon the distance and the immediacy function.  In order to avoid technical
issues such as divisibility, we make here some simplifying assumptions on the
form of the immediacy function. We note that these assumptions apply to the
examples of immediacy functions discussed in the introduction, and also that the
framework of \lamCodes{} is flexible enough that one can work through a similar
route even when they do not hold.

Let \(T: \inb{0, 1}^n \rightarrow \Sigma^n\) be a tree code with immediacy function
\(\im\), distance parameter \(\delta \in (0, 1)\), and suppose that its depth
\(n = 2\cdot\im(\ell t)\), where \(\ell\) and \(t\) are positive integers, with \(t = t(\delta)\)
possibly depending upon the distance \(\delta\).  Let
\(\kappa \defeq \floor{\lg (2/\delta)}\) be a positive integer so that
\(2^{-\kappa} < \delta \leq 2^{-\kappa + 1}\).  We assume that \(t\) can be chosen so that for
each \(1 \leq j \leq \ell\),
\begin{equation}
  \label{eq:24}
  \begin{gathered}
    2^{-\kappa} \cdot \im(j t) \text{ is an integer, and}\\
    2 \cdot \im((j-1)t)  \text{ divides } 2^{-\kappa} \cdot \im(j t)
  \end{gathered}
\end{equation}
(We note that this is the main technical assumption that would need to be
modified when working with immediacy functions other than those discussed in the
introduction. For the two immediacy functions discussed in the introduction, we
show how to choose such a \(t\) towards the end of this section.)

Given these preliminaries, we now show that \(T\) is an
\((\alpha, \ell)\)-\lamCode{} with \(\alpha \defeq 2^{-(\kappa + 1)}\) and
\(\ell\) as above.  We first show that with the divisibility condition in
\cref{eq:24}, we naturally obtain an \((\alpha, \ell)\) \lamPart{}
\((P_0, P_1, P_2, \dots, P_{\ell})\).  For each \(0 \leq j \leq \ell\), partition
\([n] = [2\im(\ell t)]\) into consecutive blocks of length \(2\im(j t)\) each, and
let these blocks constitute the \tagPart{} \(P_j\).  Given a block \(B \in P_j\)
for \(j \geq 1\), we define \(\lf{B}\) to be the set of the leftmost
\(2^{-\kappa} \cdot \im(jt)\) integers in \(B\) and \(\rg{B}\) to be the rightmost
\(2\im(j t) - \abs{\lf{B}}\) integers in \(B\). 
Given the divisibility
conditions in \cref{eq:24}, we can then verify by a direct computation that both
the laminar and size properties in the definition of an
\((\alpha, \ell)\)-\lamPart{} are satisfied by this construction.  {In
  particular, \(\lf{B}\) and \(\rg{B}\) are disjoint unions of blocks in
  \(P_{j-1}\).}

We now show that \(T\) is also \alamCode{} with respect to the tagged partition
\((P_0, P_1, \dots, P_\ell)\).  Fix any \(1 \leq j \leq \ell\) and any block
\(B\) in the \tagPart{} \(P_j\).  We only need to show that if
\(x, y \in \inb{0, 1}^n\) differ on some index in \(\lf{B}\), then \(T(x)\) and
\(T(y)\) differ on some index in \(\rg{B}\).  For any such \(x, y\), let \(i\)
be the smallest index in \(B\) on which \(x\) and \(y\) differ, and let \(S\) be
the interval \([i, i + \im(j t))\) of length \(\im(j t)\).  By the choice of
\(x \) and \(y\), we have \(i \in \lf{B}\).  Since
\(\abs{\lf{B}} = \im(j t) \cdot 2^{-\kappa}\), \(\abs{B} = 2 \im (j t)\), and
\(\kappa \geq 1\), it then follows that \(S\) is contained in \(B\), and intersects
\(\rg{B}\).  Further, by the definition of the immediacy function, the Hamming
distance between \(T(x)_S\) and \(T(y)_S\) must be at least
$$
\delta\cdot\im(j t) > 2^{-\kappa}\im(j t) = {\abs{\lf{B}}}.
$$
Thus, since this Hamming distance is greater than the length of \(\lf{B}\), it
must be the case that \(T(x)_S\) and \(T(y)_S\) differ also on some index in
\(\rg{B}\), as we wanted to prove.  This proves that \(T\) is an
\((\alpha, \ell)\)-\lamCode{}, with \(\alpha = 2^{-(\kappa + 1)}\), as chosen above.

\paragraph*{Rate upper bounds with immediacy functions} Using the rate
upper bound for \((\alpha, \ell)\)-\lamCodes{} given by our main
result~(\cref{thm-lamcode-rate-upper-bound}), we thus get that if \(\Sigma\) is the
output alphabet of \(T\), it must be the case that (recall that
\(n = 2\cdot\im(\ell t(\delta))\))
\begin{equation}
  \label{eq:25}
  \lg \abs{\Sigma} \geq \frac{\ell}{2^{1+\kappa}} 
  = \frac{\im^{-1}(n/2)}{t(\delta) \cdot 2^{1+\kappa}} \geq \frac{\delta}{4t(\delta)}\im^{-1}(n/2).
\end{equation}
Put differently, the rate \(\rho\) of such a tree code satisfies
\begin{equation}
  \label{eq:26}
  \rho \leq \frac{4t(\delta)}{\delta \cdot \im^{-1}(n/2)}.
\end{equation}
(We remark here that since we work in the setting of constant distance
\(\delta\), we have not attempted to optimize the dependence on \(\delta\) in the above
computation.)

\noindent\textbf{Choosing \(t(\delta)\).} We now show how to choose \(t(\delta)\) so as to
satisfy the divisibility conditions of \cref{eq:24}, for the two immediacy
functions considered in the introduction.
\begin{description}
\item[Case 1: \(\im(k) = 2^k\).]  Put
  \(t(\delta) = 1 + \kappa = \floor{\lg(4/\delta)}\). A direct computation shows that the
  condition in \cref{eq:24} holds with this choice.  The rate upper bound in
  \cref{eq:26} becomes
  \begin{equation}
    \label{eq:27}
    \rho \leq \frac{4\lg(4/\delta)}{\delta \cdot \lg (n/2)}.
  \end{equation}

\item[Case 2: \(\im(k) = 2^{2^k}\).] Put
  \(t(\delta) \defeq \ceil{\lg (\kappa + 2)} = \ceil{\lg \floor{\lg(8/\delta)}}\).  Again, a
  direct computation shows that the condition in \cref{eq:24} holds with this
  choice.  The rate upper bound in \cref{eq:26} becomes
  \begin{equation}
    \label{eq:22}
    \rho \leq \frac{4\cdot \ceil{\lg \lg (8/\delta)}}{\delta\cdot\lg \lg (n/2)}.
  \end{equation}
\end{description}

\section{Rate upper-bound for \lamCodes{}}
\label{sec:rate-upper-bound}

In this section, we prove our main result: a rate upper bound for \lamCodes{}
(\cref{thm-lamcode-rate-upper-bound}).  The main ingredient is a simple
information theoretic tool, that we describe next.  As stated in the
introduction, given this tool and the definition of \lamCodes{}, the proof of
the main result becomes quite simple.

\subsection{Data processing inequality}
\label{sec:data-proc-ineq}
We will need the following simple consequence of (the usual proof of) the data
processing inequality.  Similar inequalities arise also in studies of the
Gács-Körner common information~(see, e.g., \cite{kamathNewDualGacsKorner2010}).
\begin{lemma}[A consequence of the data processing inequality]\label{lem-data-processing}
  Let \(A, B, C\) be discrete random variables such that
  \(H(A|B) = H(A|C) = 0\).  Then \(I(B:C) \geq H(A)\).  In particular,
  $$H(B) + H(C) \geq H(B, C) + H(A).
  $$
\end{lemma}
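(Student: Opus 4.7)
The plan is to exploit the hypothesis in a symmetric way: $H(A|B) = 0$ means $A$ is a deterministic function of $B$, and $H(A|C) = 0$ means $A$ is a deterministic function of $C$. Intuitively, $A$ is then a piece of ``common information'' shared by $B$ and $C$, and the statement says that this common information is at most the mutual information $I(B:C)$. The natural auxiliary quantity to manipulate is therefore $I((A,B):C)$, which allows both hypotheses to be brought in simultaneously via the chain rule.

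First, I would expand $I((A,B):C)$ using the chain rule for mutual information in two different orders:
\[
I((A,B):C) = I(B:C) + I(A:C \mid B) = I(A:C) + I(B:C \mid A).
\]
Next, I would kill two of the four terms using the hypotheses. Since $H(A|B) = 0$ and conditioning reduces entropy, $H(A|B,C) \leq H(A|B) = 0$, so
\[
I(A:C\mid B) = H(A|B) - H(A|B,C) = 0.
\]
Similarly, $I(A:C) = H(A) - H(A|C) = H(A)$. Substituting these into the equality above and dropping the non-negative term $I(B:C\mid A) \geq 0$ yields $I(B:C) \geq H(A)$, which is the first claim. The ``in particular'' part then follows immediately by substituting $I(B:C) = H(B) + H(C) - H(B,C)$ and rearranging.

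I do not foresee any real technical obstacle here: every ingredient is listed in the preliminaries (chain rule for mutual information, non-negativity of conditional mutual information, and ``conditioning reduces entropy''). The only small piece of insight is to recognize that expanding $I((A,B):C)$ rather than $I(B:C)$ directly is what allows both deterministic-functional hypotheses to be used in a single symmetric computation; once this is seen, the rest is two lines.
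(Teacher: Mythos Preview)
Your proposal is correct and essentially identical to the paper's proof: both expand $I(A,B:C)$ via the chain rule in two orders, use $H(A|B)=0$ to kill $I(A:C|B)$, use $H(A|C)=0$ to get $I(A:C)=H(A)$, and drop the nonnegative $I(B:C|A)$ term. The ``in particular'' claim is derived the same way in both.
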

\begin{proof}
  Since \(A, B, C\) are discrete, the associated entropies and conditional
  entropies are non-negative.  Combining this with the non-negativity of mutual
  information and conditional mutual information, we thus have
  \begin{equation}
    \label{eq:8}
    0 \leq I(A:C|B) \leq H(A|B) = 0.
  \end{equation}
  We also have
  \(I(A:C) = H(A) - H(A|C) = H(A)\).  The chain rule for mutual information
  along with the non-negativity of the conditional mutual information then gives
  \begin{align*}
    H(A) &= I(A:C)  \leq I(A:C) + I(B:C|A) \\
         &= I(A,B : C) 
           =
           I(B:C) + I(A:C|B) \overset{\textup{\cref{eq:8}}}{=}  I(B:C),
  \end{align*}
  which proves that \(I(B:C) \geq H(A)\).  The final claim follows since
  \(H(B, C) = H(B) + H(C) - I(B:C)\).
\end{proof}

\subsection{Proof of the main result}
\label{sec:statement-proof-main}

  We now prove our main result (\cref{thm-lamcode-rate-upper-bound}).  We
  restate the result here for convenience of reference: please also see the
  remark at the end of \cref{sec-main-results} for further discussion.  
\lamUpperBound*

\begin{proof}
  To simplify notation, we shorten \(\ell(n)\) to \(\ell\) in the following.  Let
  \(P = (P_0, P_1, \dots, P_{\ell})\) be the \lamPart{} associated with the
  \lamCode{} \(c\).  Modify the code \(c\) so that it is systematic, in the
  sense that \(c(x)_k\) determines \(x_k\) for each \(1 \leq k \leq n\): this can be
  done by changing the alphabet to \(\Sigma \times \Sigma_{\mathrm{in}}\), and simply concatenating
  \(x_k\) to each \(c(x)_k\).  Note that even after the modification, the code
  \(c\) continues to be \alamCode{}, with the same associated \lamPart{} \(P\).
  However, the output alphabet of the code now is \(\Sigma'\) where
  \(\Sigma' = \Sigma \times \Sigma_{\mathrm{in}}\).

  Let \(X\) be uniformly distributed over \(\Sigma_{\mathrm{in}}^{n}\), and define
  \(Y = c(X)\).  We then have
  \(H(X) = H(Y) = n\cdot \lg \abs{\Sigma_{\mathrm{in}}}\), where we measure entropy in
  bits.  Now, consider any subset \(B\) that is an element of one of the
  partitions \(P_i\), for some \(i \geq 1\).  Since \(c\) is systematic, we then
  have \(H(X_{\lf{B}}|Y_{\lf{B}}) = H(X_{\lf{B}}|c(X)_{\lf{B}}) = 0\).  Further,
  by the neighborhood decoding property of the \lamCode{} \(c\), we also have
  \begin{equation}
    \label{eq:1}
    H(X_{\lf{B}}|Y_{\rg{B}}) = H(\phi_B(c(X)_{\rg{B}})|c(X)_{\rg{B}})  = 0.
  \end{equation}
  (Here, \(\phi_B\) is as in the definition of the neighborhood decoding property
  of \alamCode{}.)  Applying \cref{lem-data-processing}, we then have
  \begin{align}
    \label{eq:2}
    H(Y_B) = H(Y_{\lf{B}}, Y_{\rg{B}})
    &\leq H(Y_{\lf{B}}) + H(Y_{\rg{B}}) -
      H(X_{\lf{B}}) \\
    &= H(Y_{\lf{B}}) + H(Y_{\rg{B}}) - \abs{\lf{B}}\cdot\lg \abs{\Sigma_{\mathrm{in}}}\\
    &\leq H(Y_{\lf{B}}) + H(Y_{\rg{B}}) - \alpha(n)|B|\cdot \lg \abs{\Sigma_{\mathrm{in}}},\label{eq:7}
  \end{align}
  where the last inequality follows since \(\lf{B} \geq \alpha(n)|B|\), by the size
  property of \((\alpha(n), \ell(n))\)-\lamParts{}.  Note that \cref{eq:7} holds for
  every subset \(B\) that is part of some partition \(P_i\), \(i \geq 1\), of
  \(P\).  Define, for \(0 \leq i \leq \ell\)
  \begin{equation}
    \label{eq:14}
    T_i \defeq \sum_{B \in P_i}H(Y_B).
  \end{equation}
  We then have, for \(1 \leq i \leq \ell\),
  \begin{align}
    T_i &= \sum_{B \in P_i}H(Y_B)
          \overset{\textup{\cref{eq:7}}}{\leq}
          -\alpha(n)n\cdot \lg \abs{\Sigma_{\mathrm{in}}} + \sum_{B \in P_i}\inp{H(Y_{\lf{B}}) + H(Y_{\rg{B}})}\\
        &\leq -\alpha(n)n\cdot \lg \abs{\Sigma_{\mathrm{in}}} + \sum_{B'\in P_{i-1}}H(Y_{B'})
          = T_{i-1} - \alpha(n)n\cdot \lg \abs{\Sigma_{\mathrm{in}}}.    \label{eq:15}
  \end{align}
  where the second inequality follows from the laminar property of the
  \lamPart{} \(P\) (which ensures that each of \(\lf{B}\) and \(\rg{B}\) are
  disjoint unions of sets in the partition \(P_{i-1}\), for every set
  \(B \in P_i\)), along with the sub-additivity of entropy: for any two random
  variables \(Z_1\) and \(Z_2\), \(H(Z_1, Z_2) \leq H(Z_1) + H(Z_2)\).  By
  induction, \cref{eq:15} thus gives
  \begin{equation}
    \label{eq:16}
    T_{\ell} \leq T_0 - \alpha(n) n \ell \cdot \lg \abs{\Sigma_{\mathrm{in}}}.
  \end{equation}
  The subadditivity of entropy also gives \(T_{\ell} \geq H(Y) = n\cdot \lg \abs{\Sigma_{\mathrm{in}}}\), and
  \begin{equation}
    \label{eq:17}
    T_0 \leq \sum_{i=1}^nH(Y_i) \leq n \cdot (\lg \abs{\Sigma_{\mathrm{in}}} + \lg |\Sigma|), \text{ since each \(Y_i\) has
      support \(\Sigma' = \Sigma\times \Sigma_{\mathrm{in}}.\)}
  \end{equation}
  Substituting these in \cref{eq:16} gives
  \(\lg |\Sigma| \geq \alpha(n) \ell\cdot \lg \abs{\Sigma_{\mathrm{in}}}.\)
\end{proof}

The proof above is robust to certain small perturbations to the definition of
\alamCode{}. In particular, it can be easily adapted to establish rate upper
bounds for \(D\)-deficient \lamCodes{} as well, as we show below. We restate the
corresponding result (\cref{thm-deficient-lamcode-rate-upper-bound}) below for
ease of reference. \defLamUpperBound*

\begin{proof}
  The proof is a minor modification of the proof of
  \cref{thm-lamcode-rate-upper-bound}, but we include all the details for
  completeness.  Let \(P = (P_0, P_1, \dots, P_{\ell})\) be the \lamPart{}
  associated with the \(D\)-deficient \lamCode{} \(c\), and let the deficiency
  subsets \(S_i \subseteq P_{i}\) be as in the definition
  (\cref{def-deficient-lamcodes}).  Modify the code \(c\) so that it is
  systematic, in the sense that \(c(x)_k\) determines \(x_k\) for each
  \(1 \leq k \leq n\): this can be done by changing the alphabet to
  \(\Sigma \times \Sigma_{\mathrm{in}}\), and simply concatenating \(x_k\) to each
  \(c(x)_k\).  Note that even after the modification, the code \(c\) continues
  to be a \(D\)-deficient \lamCode{}, with the same associated \lamPart{} \(P\)
  and the same associated deficiency sets.  However, the output alphabet of the
  code now is \(\Sigma'\) where \(\Sigma' = \Sigma \times \Sigma_{\mathrm{in}}\).

  Let \(X\) be uniformly distributed over \(\Sigma_{\mathrm{in}}^{n}\), and define
  \(Y = c(X)\), so that \(H(X) = H(Y) = n\cdot \lg \abs{\Sigma_{\mathrm{in}}}\).  Now,
  consider any subset \(B\) that is an element of one of the partitions \(P_i\),
  for some \(i \geq 1\).  Since \(c\) is systematic, we then have
  \(H(X_{\lf{B}}|Y_{\lf{B}}) = H(X_{\lf{B}}|c(X)_{\lf{B}}) = 0\).  Further, in
  case \(B \not \in S_i\), then by the neighborhood decoding property of the
  deficient \lamCode{} \(c\), we also have
  \begin{equation}
    H(X_{\lf{B}}|Y_{\rg{B}}) = H(\phi_B(c(X)_{\rg{B}})|c(X)_{\rg{B}})  = 0.
  \end{equation}
  (Here, \(\phi_B\) is as in the definition of the neighborhood property of a
  deficient \lamCode{}.)  Applying \cref{lem-data-processing}, we then have
  \begin{align}
    H(Y_B) = H(Y_{\lf{B}}, Y_{\rg{B}})
    &\leq H(Y_{\lf{B}}) + H(Y_{\rg{B}}) -
      H(X_{\lf{B}}) \\
    &= H(Y_{\lf{B}}) + H(Y_{\rg{B}}) - \abs{\lf{B}}\cdot \lg \abs{\Sigma_{\mathrm{in}}}\\
    &\leq H(Y_{\lf{B}}) + H(Y_{\rg{B}}) - \alpha(n)|B|\cdot \lg \abs{\Sigma_{\mathrm{in}}},\label{eq:29}
  \end{align}
  where the last inequality follows since \(\lf{B} \geq \alpha(n)|B|\), by the size
  property of \((\alpha(n), \ell(n))\)-\lamParts{}.  Note that when
  \(i \geq 1\), \cref{eq:29} holds for every subset \(B \in P_i \setminus S_i\).  The only
  difference with the proof of \cref{thm-lamcode-rate-upper-bound} is that when
  \(B \in P_{i}\) is an element of the deficiency set \(S_i\), we however only
  have (by the sub-additivity of the entropy):
  \begin{equation}
    \label{eq:30}
    H(Y_B) = H(Y_{\lf{B}}, Y_{\rg{B}}) 
    \leq H(Y_{\lf{B}}) + H(Y_{\rg{B}}).
  \end{equation}
  Define, for \(0 \leq i \leq \ell\)
  \begin{equation}
    T_i \defeq \sum_{B \in P_i}H(Y_B).
  \end{equation}
  We then have, for \(1 \leq i \leq \ell\),
  \begin{align}
    T_i &= \sum_{B \in P_i}H(Y_B)
          \overset{\textup{\cref{eq:29,eq:30}}}{\leq}
          -\alpha(n)n\cdot \lg \abs{\Sigma_{\mathrm{in}}}
          + \alpha(n)\cdot \lg \abs{\Sigma_{\mathrm{in}}} \cdot \sum_{B \in S_i}\abs{B} \nonumber\\
        &\qquad + \sum_{B \in P_i}\inp{H(Y_{\lf{B}}) + H(Y_{\rg{B}})} \\
        &\leq -\alpha(n)n\cdot \lg \abs{\Sigma_{\mathrm{in}}} 
          + \alpha(n)\cdot \lg \abs{\Sigma_{\mathrm{in}}} \cdot \sum_{B \in S_i}\abs{B}
          + \sum_{B'\in P_{i-1}}H(Y_{B'}) \\
        &= T_{i-1} - \alpha(n)n\cdot \lg \abs{\Sigma_{\mathrm{in}}} 
          + \alpha(n)\cdot \lg \abs{\Sigma_{\mathrm{in}}} \sum_{B \in S_i}\abs{B}.    \label{eq:31}
  \end{align}
  where the second inequality follows from the laminar property of the
  \lamPart{} \(P\) (which ensures that each of \(\lf{B}\) and \(\rg{B}\) are
  disjoint unions of sets in the partition \(P_{i-1}\), for every set
  \(B \in P_i\)), along with the sub-additivity of entropy.  By induction,
  \cref{eq:31} thus gives
  \begin{equation}
    \label{eq:32}
    T_{\ell} \leq T_0 - \alpha(n) n \ell \cdot \lg \abs{\Sigma_{\mathrm{in}}}
    + \alpha(n)\cdot \lg \abs{\Sigma_{\mathrm{in}}} \cdot \sum_{i=1}^{\ell}\sum_{B \in S_i}\abs{B} 
    \leq T_0 - \alpha(n) \cdot \lg \abs{\Sigma_{\mathrm{in}}} \cdot \inp{n \ell  - D},
  \end{equation}
  where the last inequality uses the fact that the code is only \(D\)-deficient
  (\cref{eq:28} of \cref{def-deficient-lamcodes}).  The subadditivity of entropy
  yields \(T_{\ell} \geq H(Y) = n\cdot \lg \abs{\Sigma_{\mathrm{in}}}\), and also that
  \begin{equation}
    T_0 \leq \sum_{i=1}^nH(Y_i) \leq n \cdot (\lg \abs{\Sigma_{\mathrm{in}}} + \lg |\Sigma|), \text{ since each \(Y_i\) has
      support \(\Sigma' = \Sigma\times \Sigma_{\mathrm{in}}.\)}
  \end{equation}
  Substituting these in \cref{eq:32} gives
  \(\lg |\Sigma| \geq \alpha(n) \cdot \inp{\ell - \frac{D}{n}}\cdot \lg \abs{\Sigma_{\mathrm{in}}}.\)
\end{proof}

\section{Rate upper-bounds for immediacy properties of known constructions}
\label{sec:appl-rate-upper-bounds}

In this section, we describe the immediacy properties enjoyed by the tree code
constructions of \cite{CohenHS18}, \cite{EvansKS94} and \cite{GHKRZW15}.  The
first two are similar to---but not the same as---the informal notion of immediacy
discussed in the introduction.  However, as we show below, our general framework
of \lamCodes{} developed in \cref{sec:laminar-codes} still applies, and allows
us to conclude that the \(\Omega(1/\log \log n)\) and \(\Omega(1/\log n)\) rates achieved
by these constructions are the best possible (up to constant factors) given the
immediacy properties they achieve. The third construction, due to
\cite{GHKRZW15}, tackles a different trade-off: what is the best (even if
vanishing, as \(n\) increases) distance one can achieve for a tree code if one
imposes the condition that the rate has to be constant.  For this construction
we show that the distance \(\Omega(1/\log n)\) distance it achieves with a constant
rate is tight up to a \(\log \log n\) factor for the immediacy guarantee that
this construction provides.

\subsection{The CHS construction}
We start with the construction of \cite{CohenHS18} (which for brevity we shall
call the CHS construction).  To describe the immediacy properties of this
construction, we first import some of the notation set up in \cite{CohenHS18}.
In the following, references to theorems, pages etc.~in \cite{CohenHS18} refer
to the ECCC version of \cite{CohenHS18}. Define a sequence of length scales
(following the proof of Theorem 1.1 in \cite[p.~16]{CohenHS18}) given by
$\ell_1 \defeq 2^{20}$ and $\ell_{i+1} \defeq \ell_i^2/2^{10}$ for $i \geq 1$; thus
\(\ell_i = 2^{10}\cdot32^{2^{i}}\) for every positive integer \(i\).  We consider a
transmission length \(n\) of the form \(\ell_{m+1}\) for some positive integer
\(m\).  Thus, $n$ is divisible by each $\ell_i$, $1 \leq i \leq m$, and no divisibility
issues arise in the following description.  We also note for future use that for
\(i \geq 1\)
  \begin{equation}
    \label{eq-l-estimate}
    16 \sqrt{\ell_i} \leq \frac{3\ell_i}{8}.
  \end{equation}

  For each length scale \(\ell_i\), \(1 \leq i \leq m + 1\) as above, we consider the
  partition \(P_{i-1}\) of $[n]$ into $2n/\ell_i$ consecutive disjoint intervals of
  length $\ell_i/2$ each.  In agreement with the terminology in \cite{CohenHS18},
  we refer to the elements of the \(P_i\) as \emph{blocks}.  For a block
  \(B \in P_i\) for \(i \geq 1\), we denote by \(\lf{B}\) the set consisting of the
  first \(|B|/4\) positions in \(B\), and by \(\rg{B}\) the set consisting of
  the last \(3|B|/4\) positions in \(B\).  With these definitions,
  \(P \defeq (P_0, P_1, P_2, \dots, P_m)\) is a \((1/4, m)\)-\lamPart{} of
  \([n]\).  {In particular, for any \(B \in P_i\), \(i \geq 1\), \(\lf{B}\) and
    \(\rg{B}\) are disjoint unions of blocks in \(P_{i-1}\), because
    \(\ell_i/2\) divides \(\ell_{i+1}/8\)}.

\paragraph*{The CHS immediacy condition} We can now describe the immediacy-like
property satisfied by the CHS tree code construction.  Denote their code by
\(T\).  Let \(x\) and \(y\) be two strings in \(\inb{0, 1}^n\) which differ at a
position \(s' \in [n]\).  Then, for any \(i\) such that \(2 \leq i \leq m + 1\) such
that \(s'\) is \emph{not} in the rightmost block in \(P_{i-1}\) (i.e., such that
\(s' \leq n - \ell_{i}/2\)), the following is true.  Let \(s(i)\) denote the
\emph{leftmost} position in the unique block \(B \in P_{i-1}\) containing \(s'\)
such that \(x_{s(i)} \neq y_{s(i)}\).  Then, for any $d$ satisfying
\begin{equation}
  \label{eq-d-bounds}
  \ell_{i-1}/2 =  16\sqrt{\ell_i} \leq d \leq \ell_i/2,
\end{equation}
it holds that
\begin{equation}
  \label{eq-error-condition}
  \ham\inp{T(x)_{I(d)}, T(y)_{I(d)}} \geq d/3,
\end{equation}
where $I(d)$ denotes the interval
\begin{equation}
  \label{eq-interval-d}
  I(d) \defeq \insq{s(i),  s(i) + d}.
\end{equation}
The above condition is implicit in the proof given in \cite{CohenHS18}: in
particular, it follows by substituting \(s(i)\) above in the role played by the
``split'' \(s\) in the proof of Claim 6.4 of \cite{CohenHS18}.

We then have the following consequence of the above immediacy condition.

\begin{lemma}
\label{lem-determine}
Let \(m\) be a positive integer.  Let the sequence \(\ell_1, \ell_2, \dots\), the
positive integer \(n = \ell_{m+1}\), and the \((1/4, m)\)-\lamPart{}
\(P = (P_0, P_1, P_2, \dots, P_m) \) of \(n\) be as defined above.  Let
\(T: \inb{0, 1}^n \rightarrow \Sigma^n\) be a code satisfying the CHS immediacy condition
described above.
Consider any block \(B\) in a partition \(P_{i-1} \in P\),
\(2 \leq i \leq m + 1\), so that \(B\) is {not} the rightmost block in
\(P_{i-1}\).  Then, there exists a function \(\phi_B\) such that for any
\(x \in \inb{0, 1}^n\), \(\phi_{B}(c(x)_{\rg{B}}) = x_{\lf{B}}\).  In other words,
the prefix \(x_{\lf{B}}\) of \(x_{B}\) is uniquely determined given the suffix
\(c(x)_{\rg{B}}\) of \(c(x)_B\).
\end{lemma}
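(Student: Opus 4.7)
The plan is to prove the contrapositive: if $x, y \in \inb{0, 1}^n$ satisfy $x_{\lf{B}} \neq y_{\lf{B}}$, then $T(x)_{\rg{B}} \neq T(y)_{\rg{B}}$. Once this is established, $\phi_B$ can be defined by sending $T(x)_{\rg{B}}$ to the unique $x_{\lf{B}}$ consistent with it. The strategy is to invoke the CHS immediacy condition on a carefully chosen interval that starts in $\lf{B}$ and ends exactly at the right boundary of $B$, so that the guaranteed Hamming weight on that interval must overflow out of $\lf{B}$ while staying contained in $B$, and thus necessarily lands in $\rg{B}$.

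Assuming $x_{\lf{B}} \neq y_{\lf{B}}$, some position in $\lf{B}$ is a disagreement, so the leftmost disagreement $s(i)$ inside $B$ already lies in $\lf{B}$. Writing $k = s(i) - \min B$, we have $0 \leq k \leq \ell_i/8 - 1$. I would then choose $d = \ell_i/2 - k$, which pushes the right endpoint of $I(d) = \insq{s(i), s(i) + d}$ exactly to $\max B$, so that $I(d) \subseteq B$. This $d$ satisfies $d \leq \ell_i/2$ and, using \cref{eq-l-estimate}, $d \geq 3\ell_i/8 \geq 16\sqrt{\ell_i} = \ell_{i-1}/2$, so the CHS immediacy condition (which requires $B$ not to be the rightmost block of $P_{i-1}$, as hypothesized) does apply.

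The CHS condition then gives $\ham(T(x)_{I(d)}, T(y)_{I(d)}) \geq d/3$. Since $I(d) \subseteq B$, these disagreements split between $I(d) \cap \lf{B}$, which has size $\ell_i/8 - k$, and $I(d) \cap \rg{B}$. A short arithmetic check ($d/3 - (\ell_i/8 - k) = \ell_i/24 + 2k/3 > 0$) shows $d/3 > \ell_i/8 - k$ for every admissible $k$, so at least one disagreement must lie in $\rg{B}$, giving the desired contrapositive.

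The main obstacle, and essentially the only thing to verify carefully, is that the chosen $d$ lands in the CHS-admissible range $\insq{\ell_{i-1}/2, \ell_i/2}$ uniformly over all $k \in \inb{0, 1, \dots, \ell_i/8 - 1}$; this reduces to the estimate $16\sqrt{\ell_i} \leq 3\ell_i/8$ already recorded in \cref{eq-l-estimate}. Beyond that, the argument is just bookkeeping of interval endpoints, and crucially uses the fact that $\lf{B}$ occupies only a quarter of $B$ while the immediacy condition guarantees a one-third density of disagreement, leaving room to spill into $\rg{B}$.
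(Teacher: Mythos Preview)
Your approach is essentially identical to the paper's: set up the contrapositive, take the leftmost disagreement $s(i)$ in $B$, choose $d$ so that $I(d)$ runs from $s(i)$ to the right end of $B$, verify $d$ lies in the admissible range via \cref{eq-l-estimate}, and then observe that a $d/3$ Hamming weight cannot fit entirely inside $I(d)\cap\lf{B}$. The paper presents the last step as a contradiction (assume $T(x)_{\rg{B}}=T(y)_{\rg{B}}$ and bound the distance by $|S|$), while you do the direct inequality $d/3>\ell_i/8-k$; these are the same computation.

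One small arithmetic slip: with $k=s(i)-\min B$ and $d=\ell_i/2-k$, you get $s(i)+d=\min B+\ell_i/2=\max B+1$, not $\max B$, so $I(d)$ actually overshoots $B$ by one position and your claim $I(d)\subseteq B$ is off by one. The paper's indexing ($k(i)\geq 1$ counted from $1$, $d=\ell_i/2-k(i)$) lands exactly on $\max B$. This does not break your argument---the slack $d/3-(\ell_i/8-k)\geq\ell_i/24$ is enormous compared to $1$---but you should either shift to $d=\ell_i/2-1-k$ or account for the extra position when concluding that a disagreement must land in $\rg{B}$.
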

\begin{proof}
  It is sufficient to show that if \(x, y \in \inb{0, 1}^n\) are such that
  \(x_{\lf{B}} \neq y_{\lf{B}}\) then it must be the case that
  \(c(x)_{\rg{B}} \neq c(y)_{\rg{B}}\).  Consider therefore
  \(x, y \in \inb{0, 1}^n\) such that \(x_{\lf{B}} \neq y_{\lf{B}}\).  Thus, there
  must be a position \(s'\) in \(\lf{B}\) such that \(x_{s'} \neq y_{s'}\).  Let
  \(s(i)\) be the leftmost such position in \(\lf{B}\).  Choose \(d\) so that
  the interval
  $$
  I(d) = [s(i), s(i) + d] = B \cap [s, n],
  $$
  i.e., so that
  \(I(d)\) is the suffix of \(B\) starting at \(s(i)\).  Since
  \(\abs{B} = \ell_i/2\), this \(d\) is of the form \(\ell_i/2 - k(i)\), where
  \(k(i) \geq 1\) is the relative index, counting starting with one from the left,
  of \(s(i)\) within \(B\).  Since \(s(i) \in \lf{B}\) and \(\lf{B}\) consists of
  the leftmost \(\abs{B}/4 = \ell_i/8\) positions in \(B\), it is also the case
  that \(k(i) \leq \ell_{i}/8\).  Given \cref{eq-l-estimate}, \(d\) therefore
  satisfies the lower bound required in \cref{eq-d-bounds} since
  \(k(i) \leq \ell_i/8\) and \(i \geq 2\).  It also satisfies the upper bound required in
  \cref{eq-d-bounds} by construction.  Since \(B\) is, by hypothesis, not the
  rightmost block in \(P_{i-1}\), \cref{eq-error-condition} then yields that
  \begin{equation}
    \label{eq:3}
    \ham\inp{c(x)_{I(d)}, c(y)_{I(d)}} \geq d/3 = \frac{\ell_i}{8} + \frac{\ell_i/8 - k(i)}{3}.
  \end{equation}
  Suppose, if possible, that \(c(x)_{\rg{B}} = c(y)_{\rg{B}}\), so that
  \(\ham\inp{c(x)_{\rg{B}}, c(y)_{\rg{B}}} = 0.\) Let \(S\) denote the suffix of
  \(\lf{B}\) starting from the position \(s(i)\), so that
  \(I(d) = S \sqcup \rg{B}\).  We then have
  \begin{align}
    \label{eq:6}
    \ham\inp{c(x)_{I(d)}, c(y)_{I(d)}} 
    &= \ham\inp{c(x)_{S}, c(y)_{S}} +
      \ham\inp{c(x)_{\rg{B}}, c(y)_{\rg{B}}} \\
    &= \ham\inp{c(x)_{S}, c(y)_{S}}
    \leq |S|
    = \frac{\ell_i}{8} - k(i) + 1.
  \end{align}
  However, this contradicts \cref{eq:3} since \(\ell_i \geq 32\) and
  \(k(i) \geq 1\).  It must therefore be the case that
  \(c(x)_{\rg{B}} \neq c(y)_{\rg{B}}\).
\end{proof}

It is an easy consequence of the above lemma that any binary tree code
satisfying the CHS immediacy condition is a deficient \lamCode{}, with a small
deficiency parameter.

\begin{proposition}\label{prop-chs-immediate}
  Let \(m\) be a positive integer.  Let the sequence \(\ell_1, \ell_2, \dots\), the
  positive integer \(n = \ell_{m+1}\), and the \((1/4, m)\)-\lamPart{}
  \(P = (P_0, P_1, P_2, \dots, P_m) \) of \(n\) be as defined above.  Let
  \(T: \inb{0, 1}^n \rightarrow \Sigma^n\) be a code satisfying the CHS immediacy condition.
  Then \(T\) is an \(n\)-deficient \((1/4, m)\)-\lamCode{}.
\end{proposition}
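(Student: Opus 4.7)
The plan is to read off the deficiency sets directly from the one edge case in \cref{lem-determine}: the lemma gives us the neighborhood decoding property $\phi_B$ for every block $B$ of every $P_j$ ($1 \le j \le m$) \emph{except possibly} the rightmost block of $P_j$. So I will take $S_j \subseteq P_j$ to be the singleton consisting of the rightmost block of $P_j$ (or the empty set if the rightmost block happens to decode; the singleton choice is a safe over-approximation). The neighborhood decoding requirement in \cref{def-deficient-lamcodes} is then immediate from \cref{lem-determine}, and the only remaining task is to check that the resulting total deficiency fits inside the budget $D = n$.

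For this bookkeeping step, the block size in $P_j$ is $\ell_{j+1}/2$, so
\[
\sum_{j=1}^{m}\sum_{B \in S_j}\abs{B} \;=\; \sum_{j=1}^{m}\frac{\ell_{j+1}}{2}.
\]
The point is that the sequence $\ell_j$ grows super-exponentially: from $\ell_{i+1} = \ell_i^2/2^{10}$ and $\ell_i \ge \ell_1 = 2^{20}$ one gets the ratio bound $\ell_{i+1}/\ell_i \ge 2^{10}$ for every $i \ge 1$. Iterating gives $\ell_{j+1} \le \ell_{m+1}/2^{10(m-j)}$, so the geometric series estimate
\[
\sum_{j=1}^{m}\frac{\ell_{j+1}}{2}
\;\le\; \frac{\ell_{m+1}}{2}\sum_{k=0}^{m-1}2^{-10k}
\;\le\; \frac{\ell_{m+1}}{2}\cdot\frac{1}{1-2^{-10}}
\;\le\; \ell_{m+1} \;=\; n
\]
closes the budget with plenty of room to spare. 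Combined with the fact (already established in the paragraphs preceding the proposition) that $P = (P_0, P_1, \ldots, P_m)$ is a $(1/4, m)$-\lamPart{} of $[n]$, this yields the claimed $n$-deficient $(1/4, m)$-\lamCode{} structure.

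I do not expect a genuine obstacle here: the conceptual content was absorbed by \cref{lem-determine}, which already pinpoints the unique failure mode (the rightmost block of each level), and the super-exponential spacing of the length scales makes the deficiency-sum estimate essentially trivial. The only thing one has to be a little careful about is the indexing convention—\cref{lem-determine} is phrased in terms of $P_{i-1}$ for $2 \le i \le m+1$, which corresponds to $P_j$ for $1 \le j \le m$—so I would state the definition of $S_j$ in matching form and then cite the lemma verbatim.
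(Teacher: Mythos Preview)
Your proposal is correct and essentially identical to the paper's proof: both take $S_j$ to be the singleton containing the rightmost block of $P_j$, invoke \cref{lem-determine} for the neighborhood decoding property on all remaining blocks, and bound $\sum_{j=1}^m \ell_{j+1}/2 \le n$ via the super-exponential growth of the $\ell_j$. The paper presents the final sum as $\frac{n}{2}\sum_{i=1}^m \ell_{i+1}/\ell_{m+1} \le n$, but this is the same geometric-series estimate you give.
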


\begin{proof}
  For each \(1 \leq i \leq m\) define \(S_i\) to consist only of the rightmost block
  in \(P_i\).  Then, it follows directly from \cref{lem-determine} that \(T\) is
  a \(D\)-deficient \((1/4,m)\)-\lamCode{} with deficiency sets
  \(S_1, S_2, \dots, S_{m}\), provided \(D\) satisfies
  \begin{equation}
    \label{eq:23}
    D \geq \sum_{i=1}^m\sum_{B \in S_i}|B|.
  \end{equation}
  Now, the latter quantity can be bounded from above as
  \begin{equation}
    \frac{n}{2}\sum_{i=1}^m\frac{\ell_{i+1}}{\ell_{m+1}} \leq n,
  \end{equation}
  using the growth condition on the \(\ell_i\).  Thus the choice \(D=n\) works.
\end{proof}

Applying \cref{thm-deficient-lamcode-rate-upper-bound}, we thus get that the
output alphabet size of any binary tree code satisfying the CHS immediacy
condition with transmission length \(n\) as above satisfies
\begin{equation}
  \label{eq:33}
  {\lg \abs{\Sigma}} \geq \frac{m-1}{4} = \Omega(\lg \lg n),
\end{equation}
since \(n = 2^{10}\cdot32^{2^{m+1}}.\) Thus, the rate \(\rho\) of such a code must
satisfy \(\rho = O(1/\lg \lg n)\).

\subsubsection{Subexponential time constructions of good tree codes}
\label{sec-subexp-time-constr}

Braverman~\cite{Braverman12} presented a generic procedure to explicitly
construct good tree codes in sub-exponential time.  More formally, given
\(\epsilon \in (0, 1)\), Braverman's procedure runs in time
\(\exp(O(n^{\epsilon}))\) and outputs a depth \(n\) tree code that has constant rate
(depending upon \(\epsilon\)) and fixed constant distance.  At a high level,
Braverman's construction works by starting with a brute force
\(\exp(O(n^{\epsilon}))\)-time construction of a tree code of depth
\(n^{\epsilon}\), and then using a sequence of (efficiently implementable) tilings of
this intermediate tree code, augmented with error-correcting codes, in order to
handle error-propagation at different length scales.  However, as described in
\cite{Braverman12}, this construction achieves a final alphabet size that is at
least exponential in \(1/\epsilon\), so that the rate achieved is no better than
\(O(1/\epsilon)\).

We first review here a refinement of the above construction, mentioned in
passing in \cite[remarks following Theorem 1.1]{CohenHS18}, that constructs a
constant distance tree code with rate \(\Theta(1/\log(1/\epsilon))\) and depth
\(n\) in time \(\exp(O(n^{\epsilon}))\).  We then describe the immediacy properties of
this construction, and show that no code with these immediacy properties can
have rate better than \(\Omega(1/\log(1/\epsilon))\).  The arguments are very similar to
those carried out above for the immediacy properties of the CHS construction.

\paragraph*{Review of a subexponential-time constructible good tree code} We use
the notation from \cite{CohenHS18} imported above.  Let \(k\) be a fixed
positive integer parameter (corresponding roughly to \(\log(1/\epsilon)\)).  Consider
an \(n\) of the form \(\ell_{m+1} = 2^{10}\cdot32^{2^{m + 1}}\) as above. The
polynomial time (in \(n\)) construction of a tree code of depth \(n\) given in
the proof of Theorem 1.1 of \cite{CohenHS18} consists principally of a
concatenation of constant rate intermediate codes \(C_i\), for each
\(1 \leq i \leq m + 1\), where the \(i\)th code corresponds to length scale
\(\ell_i=2^{10}\cdot 32^{2^i}\).  In consequence, the resulting final code has rate
\(\Theta(1/\lg \lg n)\). 

To get a better rate at the cost of an increase in the runtime of the
construction, start by replacing the above construction by a concatenation of
only the codes \(C_{m-k+1}, C_{m-k+2}, \dots, C_{m+1}\).  Following the proof of
Theorem 1.1 of \cite{CohenHS18}, one then gets that this resulting intermediate
code is, in the terminology of \cite{CohenHS18}, a \emph{lagged tree code}
(i.e., where error propagation from an error starts after a certain lag) with
lag \(16\sqrt{\ell_{m-k+1}} = \ell_{m-k}/2\).  Applying a variant of the idea from
\cite{Braverman12}, one can then remove the lag as follows.  First construct, in
time \(\exp(O(\ell_{m-k}))\), a constant rate and constant distance tree code of
depth \(\ell_{m-k+1}\). Then, concatenate to the above intermediate construction
two tilings of this good tree code staggered by a shift of \(\ell_{m-k+1}/2\)
(following, e.g., Lemma 5 of \cite{Braverman12}).  The resulting construction is
a tree code of constant distance and rate \(\Theta(1)/k\), obtained in time
\(\exp(O(\ell_{m-k+1})) = \exp(O(n^{1/2^k}))\).  Reparameterizing with
\(\epsilon = 1/2^{k}\), we see that the rate is \(\Theta(1)/\lg(1/\epsilon)\) and the runtime of
the construction is \(\exp(O(n^{\epsilon}))\).

\paragraph*{Immediacy properties} We now consider the immediacy properties of the
the above construction.  The arguments before \cref{lem-determine}, when applied
to above variant of the CHS construction, show that the immediacy conditions
described in \cref{eq-d-bounds,eq-error-condition,eq-interval-d} hold for this
version for all \(i\) satisfying \(m - k + 2 \leq i \leq m + 1\). (For the original
CHS construction, these conditions held for all \(i\) such that
\(2 \leq i \leq m + 1\).)

\paragraph*{Rate upper bound} We now show that any tree code satisfying these
immediacy properties cannot have rate better better than \(O(1/k)\).  To do
this, we repeat the proof of \cref{lem-determine,prop-chs-immediate} with the
above modified version of the CHS immediacy condition to obtain that any such
code must be an \(n\)-deficient \((1/4, k)\)-\lamCode{} with respect to the
\((1/4, k)\)-\lamPart{} \((P_{m-k}, P_{m-k+1}, \dots, P_{m})\) (where the
\(P_i\) are as defined in the above description of the original CHS
construction, following \cref{eq-l-estimate}).  Applying \cref{thm-deficient-lamcode-rate-upper-bound}, we
therefore get that the output alphabet size of any depth \(n\) binary tree code satisfying
the above variant of the CHS immediacy condition satisfies
\begin{equation}
  \label{eq:9}
  {\lg \abs{\Sigma}} \geq \frac{k-1}{4}.
\end{equation}
Thus, the rate \(\rho\) of such a code must satisfy \(\rho = O(1/k)\).

\subsection{The EKS construction}
\label{sec:eks-construction}
\newcommand{\ECC}{\ensuremath{\mathrm{ECC}}}

We now turn to the construction of \cite{EvansKS94} (which for brevity we shall call the EKS construction.)   A version of this construction can be
described as follows.  Let
\(\ECC_{\ell}: \inb{0, 1}^{\ell} \rightarrow \inb{\inb{0, 1}^{b}}^{\ell}\) denote an
error-correcting block code with distance \(\delta \in (0, 1)\).  (It is known that
there is a choice of \(b = b(\delta)\), independent of the block-length
\(\ell\), so that such a code exists for each positive integer \(\ell\), and is such
that \(\ECC_{\ell}\) can be computed in time polynomial in \(\ell\): see, e.g.,
\cite[Lemma 3.2 in the ECCC version]{CohenHS18}.)  {Note that $\ECC_1$
  can be chosen to be the repetition code of length $b$.}
For simplicity, we describe
the construction of the EKS tree code \(T\) when the depth \(n = 2^k\) is a
power of \(2\).  For \(0 \leq i \leq k\), let \(P_i\) be the partition of
\([n]\) consisting of \(n/2^i\) consecutive disjoint blocks of length \(2^i\)
each, {and we write $P_{i,j}=\{(j-1)2^i+1,\ldots,j\cdot2^i\}$. For a message \(x \in \inb{0, 1}^n\) we let $x_{P_{i,j}} \in \inb{0,1}^{2^i}$ denote the restriction of $x$ to the interval $P_{i,j}$. 
} To describe the encoding \(T(x)\) of a message \(x \in \inb{0, 1}^n\), we
form the following \((k + 1) \times n\) table {with entries in \(\inb{0, 1}^b\)}:

\begin{enumerate}
\item {The first row consists of the concatenation of $\ECC_1(x_j)$ for $j=1,\ldots,n$.}
\item {The \(i\)th row, for \(2 \leq i \leq k + 1\), consists of
    $(0^b)^{2^{i-2}}$ followed by the concatenation of
    $\ECC_{2^{i-2}}(x_{P_{i-2,j}})$ for $j=1,\ldots,n \cdot 2^{2-i}-1$. Informally, the
    code blocks in the \(i\)th row are ``right-shifted'' by
    \(2^{i-2}\).} \label{item-shift}
\end{enumerate}

The encoding \(T(x)\) is defined by letting its \(j\)th character, for
\(1 \leq j \leq n\), be the \(j\)th column (from the left) of this table.  Note that
because of the ``right shift'' operation in \cref{item-shift}, this construction
satisfies the ``online'' requirement for tree codes.  The output alphabet of
\(T\) is {$\{0,1\}^{b(k+1)}$.}

\paragraph*{The EKS immediacy condition}  From the construction, it is easy to
see that the EKS construction satisfies the following version of immediacy.
Denote the code by \(T\), and let \(x, y \in \inb{0, 1}^n\) be two strings that
differ at a position \(s' \in [n]\).  Let \(0 \leq \ell < k\) be any integer such that
\(s' \leq n - 2^\ell = 2^k - 2^{\ell}\).  Define
\(s = s(\ell) \defeq \ceil{s'/2^\ell}\cdot2^\ell\) to be the smallest multiple of
\(2^\ell\) that is at least as large as \(s'\).  Then
\begin{equation}
  \label{eq:4}
  \ham(T(x)_{(s, s + 2^\ell]}, T(y)_{(s, s + 2^\ell]}) \geq \delta \cdot 2^\ell > 0.
\end{equation}

We now show that any code satisfying the immediacy condition described by
\cref{eq:4} must be an \((\Omega(1), \Omega(\log n))\)-\lamCode.  To see this, we consider
the partitions \(P_i\), \(0 \leq i \leq k\), of \([n] = [2^k]\) considered above, and
convert each of them into \atagPart{} by defining, for each \(B \in P_i\) with
\(i \geq 1\), \(\lf{B}\) to be the leftmost \(2^{{i-1}}\) entries in \(B\) and
\(\rg{B}\) to be the rightmost \(2^{i-1}\) entries in \(B\).  With this
construction, it is immediate that \((P_0, P_1, P_2, \dots, P_k)\) is a
\((1/2, k)\)-\lamPart{}.  Since the code \(T\) satisfies the immediacy condition
in \cref{eq:4}, it is also immediate that it satisfies the neighborhood decoding
property for each \(B \in P_i\) with \(i \geq 1\).  We thus see that any tree code
\(T\) with depth \(n = 2^k\) and satisfying the immediacy condition given by
\cref{eq:4} must be a \((1/2, k)\)-\lamCode{}.  Applying
\cref{thm-lamcode-rate-upper-bound}, we thus see that the output alphabet
\(\Sigma\) of such a code must satisfy
\begin{equation}
  \label{eq:5}
  \abs{\Sigma} \geq \frac{k}{2} = \Omega(\lg n),
\end{equation}
since \(n = 2^k\).  Thus the rate \(\rho\) of such a code must be \(O(1/\lg n)\).

\NewDocumentCommand{\ghk}{}{GHKRZW}

\subsection{The \texorpdfstring{\ghk{}}{GHK+} construction}
\label{sec:ghk-construction}
We now turn to a tree code construction by Gelles, Haeupler, Kol, Ron-Zewi and
Wigderson~\cite{GHKRZW15} (which we shall refer to as the \ghk{}
construction).  While related to the EKS construction discussed in
\cref{sec:eks-construction}, this construction is qualitatively different from
the previous constructions considered in this paper in that it insists upon a
\emph{constant}, non-vanishing rate even if at the cost of a vanishing
\emph{distance} parameter.  We show here that for the immediacy guarantee the
\ghk{} construction achieves, and given its requirement of a constant rate, the
\(\Omega(1/\log n)\) vanishing distance it achieves is tight up to a \(\log \log n\)
factor.

The \ghk{} construction (described in \cite[Section 5.1]{GHKRZW15}) demonstrates
the following: there exists a positive integer \(k_0\), such that for every
\(\epsilon \in (0, 1)\) there is a tree code \(T\), with transmission length
\(n\), input alphabet
\(\Sigma_{\mathrm{in}} \defeq \inb{0, 1}^{(\lg n)/\epsilon}\), output alphabet
\(\Sigma \defeq \inb{0, 1}^{(1 + 1/\epsilon)\cdot\lg n}\) (so that its rate is
\(1/(1+\epsilon)\)) and distance parameter \(\delta\) is
\(\frac{\epsilon}{32k_0\cdot \lg n }\).  Here, it is convenient to assume (as we do now)
that \(\lg n, k_0\) and \(1/\epsilon\) are all powers of two, and that
\(k_0 \geq 16\).  (The notation and conventions here are essentially exactly those
adopted in \cite[Section 5.1]{GHKRZW15}).

\paragraph*{The \ghk{} immediacy condition}  The distance proof for the \ghk{}
construction, given in Section 5.2 of \cite{GHKRZW15}, in fact establishes the
following immediacy property.  Let \(k_{0}, \epsilon\) be parameters not growing with
the transmission length \(n\) (as above) and let
\(m \defeq (k_0/\epsilon) \cdot \lg n\) be as defined in \cite{GHKRZW15}; note that
\(m\) is also a power of two by the assumptions adopted above.  Let \(t\) be any
integer satisfying \(\lg m \leq t \leq \lg n - 1\).  Suppose that
\(x, y \in \Sigma_{\mathrm{in}}\) differ at a position \(i \leq n - 2^{t}\), and Let
\(i_0 = i_0(t) \defeq \floor{(i - 1)2^{-t}}\cdot2^t\) be the largest multiple of
\(2^t\) that is smaller than \(i\), as defined in \cite[Section 5.2]{GHKRZW15}.
Then,
\begin{equation}
  \label{eq-ghk-immediacy}
  \ham\inp{
    T(x)_{(i_{0}, i_0 + 2^{t+1}]},
    T(y)_{(i_{0}, i_0 + 2^{t+1}]}
  } \geq \delta\cdot2^{{t+1}}.
\end{equation}
 
To prove this, one repeats the proof in Section 5.2 of \cite{GHKRZW15}, ignoring
the parameter \(j\) in that proof, and also ignoring the condition, stipulated
in the proof, that \(i\) is the \emph{first} position on which \(x\) and \(y\)
differ.  The last paragraph on p.~1930 in \cite{GHKRZW15}, combined with the
first display on p.~1931 (neither of which use the latter condition that \(i\)
is the first position on which \(x\) and \(y\) differ), then yields (a slightly
stronger version of) the above condition, after we note that \(\delta'\) in that
proof is at least \(8\delta\), where \(\delta\), as above, is the distance parameter
achieved by the \ghk{} construction.\footnote{For the convenience of the reader,
  we note here that there is a minor typographical error on p.~1930 of
  \cite{GHKRZW15}, where the ``\(\max\)'' in the first display in Section 5.2
  should actually be ``\(\min\)''.}

We now show that any constant rate tree code satisfying the immediacy condition
specified by \cref{eq-ghk-immediacy} cannot have distance much better than the
\ghk{} construction.  The proof is similar to the case of an exponential
immediacy function analyzed in \cref{sec:immed-funct-lamc}.

As in that proof, our goal is to show that such a code is \alamCode{} with
appropriate parameters.  We begin by defining the corresponding \lamPart{}.  Let
\(\delta \in (0, 1)\) be the distance parameter (possibly dependent upon \(n\)) of such
a code.  Let \(\kappa \defeq \floor{\lg(2/\delta)}\) be a positive integer so that
\(2^{-\kappa} < \delta \leq 2^{-\kappa + 1}\).  Set
\begin{equation}
  \label{eq:12}
  \ell \defeq 1 + \floor{(\lg{(n/(2m))})/\kappa}.
\end{equation}
Let \(P_0\) be the partition of \([n]\) into \(n\) singletons, and for
\(1 \leq i \leq \ell\), define \(P_{i}\) to the partition of \([n]\) into blocks of
consecutive integers, each of length \(n/2^{\kappa \cdot (\ell - i)}\).  Thus, for
\(1 \leq i \leq i + 1\), any set \(B\) in \(P_i\) is an interval of the form
\((b\cdot2^{t+1}, (b + 1)\cdot2^{t+1}]\), where \(b\) is an integer and
\(t = \lg n - {\kappa \cdot (\ell - i) - 1}\).  For \(1 \leq i \leq \ell\), convert
\(P_i\) into a \tagPart{} by defining, for each \(B \in P_i\), \(\lf{B}\) to be
the set of the smallest \(2^{-\kappa}\cdot\abs{B}\) elements in \(B\) (so that
\(\rg{B} = B \setminus \lf{B}\)).  It is easy to verify that the above construction
gives a \((2^{-\kappa}, \ell)\)-\lamPart{} of \([n]\).

We now show that any code satisfying the immediacy condition of
\cref{eq-ghk-immediacy} must be a \((2^{-\kappa}, \ell)\)-\lamCode{}.  We use the
\lamPart{} defined above.  Fix \(1 \leq j \leq \ell\) and \(B \in P_j\).  Let
\(t = \lg n - {\kappa \cdot (\ell - j) - 1}\) be as above (note that \(t\) satisfies
\(\lg m \leq t \le \lg n - 1\)), so that
\(B = (b\cdot2^{t+1}, (b + 1)\cdot2^{t+1}]\) for some non-negative integer
\(b\). Consider any \(x, y \in \Sigma_{\mathrm{in}}^{n}\) that differ on
\(\lf{B} = (b\cdot2^{t+1}, (b + 2^{-\kappa})\cdot2^{t+1}]\).  Thus, there exists
\(i \in \lf{B} \subseteq (b\cdot2^{t+1}, (b + 1/2)\cdot2^{t+1}]\) such that
\(x_i \neq y_i\).  The immediacy condition of \cref{eq-ghk-immediacy} then gives
that
\begin{equation}
  \label{eq:10}
  \ham\inp{T(x)_{B}, T(y)_B} \geq \delta \cdot 2^{{t+1}} > 2^{-\kappa} \cdot 2^{{t+1}} = \abs{\lf{B}}.
\end{equation}
Here the second inequality uses \(\delta > 2^{-\kappa}\).  Thus, we see that the Hamming
distance between \(T(x)_B\) and \(T(y)_B\) is strictly more than the size of
\(\lf{B}\) whenever \(x_{\lf{B}}\) and \(y_{\lf{B}}\) differ.  This implies that
\(T(x)_{\rg{B}}\) and \(T(y)_{\rg{B}}\) must differ whenever \(x_{\lf{B}}\) and
\(y_{\lf{B}}\) differ.  This establishes that \(T\) is a
\((2^{-\kappa}, \ell)\)-\lamCode{}.

We now apply \cref{thm-lamcode-rate-upper-bound} to see that a tree code
\(T: \Sigma_{\mathrm{in}}^n \rightarrow \Sigma^{n}\) satisfying the immediacy condition of
\cref{eq-ghk-immediacy} must therefore satisfy
  \begin{equation}
    \label{eq:11}
    \frac{\lg \abs{\Sigma}}{\lg \abs{\Sigma_{\mathrm{in}}}} 
    \geq 2^{-\kappa} \cdot \ell 
    \overset{\textup{\cref{eq:12}}}{\ge}
      \frac{\delta \cdot \lg \frac{n}{2m}}{2\cdot \lg(2/\delta)}. 
  \end{equation}
  Recall that \(m = (k_0/\epsilon) \cdot \lg n\), where \(k_{0}, \epsilon\) do not grow with
  \(n\).  Thus, \cref{eq:11} shows that any constant rate tree code satisfying
  the immediacy condition \cref{eq-ghk-immediacy} of the \ghk{} construction
  must have a distance parameter \(\delta\) satisfying
  \begin{equation}
    \label{eq:13}
    \frac{\delta}{\log(1/\delta)} = O(1/\log n).
  \end{equation}
  In particular, a code with the immediacy guarantee that the \ghk{}
  construction achieves cannot achieve a distance parameter \(\delta\) that is
  \(\omega\inp{\frac{\log \log n}{\log n}}\).

\bibliography{refs}

\end{document}